\documentclass[sigconf]{acmart}
\AtBeginDocument{%
  }

\copyrightyear{2026}
\acmYear{2026}
\setcopyright{cc}
\setcctype{by}
\acmConference[CCS '26] {Proceedings of the 2026 ACM SIGSAC Conference on Computer and Communications Security}{November 15--19, 2026}{The Hague, Netherlands.}
\acmBooktitle{Proceedings of the 2026 ACM SIGSAC Conference on Computer and Communications Security (CCS '26), November 15--19, 2026, The Hague, Netherlands}
\acmISBN{979-8-4007-2871-6/2026/11}
\acmDOI{10.1145/XXXXXX.XXXXXX}

\usepackage{amsmath}
\usepackage{kotex}
\usepackage{url}
\usepackage{xspace}

\usepackage{amsthm}
\usepackage{xcolor}
\usepackage{dsfont}
\usepackage{enumitem}
\usepackage{multirow}

\newtheorem{definition}{Definition}

\usepackage{balance}

\usepackage{threeparttable}
\usepackage[table]{xcolor}

\usepackage{algorithm, algorithmic}

\newcommand{\net}{\textbf{\textsc{net}}\xspace}
\begin{document}

%%
%% The "title" command has an optional parameter,
%% allowing the author to define a "short title" to be used in page headers.
% \title{
% Casting the Net! Impersonation Attack Against Biometric Authentication Systems Under Realistic Threat Model}
\title{
Casting the Net! Revisiting MasterFace Impersonation Attacks
}
%%
%% The "author" command and its associated commands are used to define
%% the authors and their affiliations.
%% Of note is the shared affiliation of the first two authors, and the
%% "authornote" and "authornotemark" commands
%% used to denote shared contribution to the research.

% \author{Anonymous}

% \author{Ben Trovato}
% \authornote{Both authors contributed equally to this research.}
% \email{trovato@corporation.com}
% \orcid{1234-5678-9012}
% \author{G.K.M. Tobin}
% \authornotemark[1]
% \email{webmaster@marysville-ohio.com}
% \affiliation{%
%   \institution{Institute for Clarity in Documentation}
%   \city{Dublin}
%   \state{Ohio}
%   \country{USA}
% }

\author{Seunghun Paik}
\authornote{Department of Mathematics \& Research Institute for Natural Sciences}
\authornote{Equally Contributed}
\affiliation{%
  % \institution{Research Institute for Natural Sciences}
  \institution{Hanyang University}
  \city{Seoul}
  \country{Republic of Korea}  
}
\email{whitesoonguh@hanyang.ac.kr}
\author{Sunpill Kim}
\authornotemark[1]
\authornotemark[2]
\affiliation{%
  % \institution{Research Institute for Natural Sciences}
  \institution{Hanyang University}
  \city{Seoul}
  \country{Republic of Korea}    
}
\email{ksp0352@hanyang.ac.kr}
\author{Chanwoo Hwang}
\authornotemark[1]
\affiliation{%
  % \institution{Research Institute for Natural Sciences}
  \institution{Hanyang University}
  \city{Seoul}
  \country{Republic of Korea}    
}
\email{aa5568@hanyang.ac.kr}
\author{Jae Hong Seo}
\authornotemark[1]
\authornote{Corresponding Author}
\affiliation{%
  % \institution{Research Institute for Natural Sciences}
  \institution{Hanyang University}
  \city{Seoul}
  \country{Republic of Korea}    
}
\email{jaehongseo@hanyang.ac.kr}

% \author{Valerie B\'eranger}
% \affiliation{%
%   \institution{Inria Paris-Rocquencourt}
%   \city{Rocquencourt}
%   \country{France}
% }

% \author{Aparna Patel}
% \affiliation{%
%  \institution{Rajiv Gandhi University}
%  \city{Doimukh}
%  \state{Arunachal Pradesh}
%  \country{India}}

% \author{Huifen Chan}
% \affiliation{%
%   \institution{Tsinghua University}
%   \city{Haidian Qu}
%   \state{Beijing Shi}
%   \country{China}}

% \author{Charles Palmer}
% \affiliation{%
%   \institution{Palmer Research Laboratories}
%   \city{San Antonio}
%   \state{Texas}
%   \country{USA}}
% \email{cpalmer@prl.com}

% \author{John Smith}
% \affiliation{%
%   \institution{The Th{\o}rv{\"a}ld Group}
%   \city{Hekla}
%   \country{Iceland}}
% \email{jsmith@affiliation.org}

% \author{Julius P. Kumquat}
% \affiliation{%
%   \institution{The Kumquat Consortium}
%   \city{New York}
%   \country{USA}}
% \email{jpkumquat@consortium.net}

%%
%% By default, the full list of authors will be used in the page
%% headers. Often, this list is too long, and will overlap
%% other information printed in the page headers. This command allows
%% the author to define a more concise list
%% of authors' names for this purpose.
% \renewcommand{\shortauthors}{S. Paik et al.}

%%
%% The abstract is a short summary of the work to be presented in the
%% article.
\begin{abstract}

Impersonation is a fundamental security threat in face recognition systems (FRSs).
While the security of FRSs has been challenged by various attack vectors, under realistic adversarial capabilities, e.g., a limited number of decision-only authentication trials and no internal system knowledge, most attack techniques become infeasible.
As a result, impersonation by \textit{zero-effort impostors}, characterized by false match rate (FMR), is commonly regarded as a standalone baseline. 
A few years ago, impersonation attacks based on MasterFaces---which exploit non-uniformity in biometric distribution---emerged as a notable security threat that could break the barrier of the FMR-based baseline under such realistic constraints.
However, they were believed not to yield impersonation above the standard FMR in modern FRSs, as discussed by multiple follow-up studies.

In this paper, we demonstrate that even legitimate access to public commercial APIs allows an adversary to amplify impersonation rates through MasterFaces, resulting in a non-trivial impersonation attack beyond FMR on downstream applications built on top of these APIs.
We observe that several real-world FRS deployments are implemented using commercial APIs, and that the backend service provider is publicly disclosed or trivially inferable.
As a result, the adversary can purchase these \textit{pay-as-you-go} API services without requiring any additional privilege over the target FRS.
Motivated by this observation, we formalize the MasterFaces attack as a maximum coverage problem over the biometric representation space, which we call a \net, and show that the adversary can construct an API-tailored \net by leveraging the geometric structure of the representation space.
Through experiments, we demonstrate that our attack amplifies the impersonation rates of several open-source and commercial API-based FRSs by up to 9.5$\times$ within at most 30 authentication trials, compared to those expected from the standard FMR.
Overall, we revive the MasterFaces attack as posing a potential vulnerability against real-world FRSs.
\end{abstract}

%%
%% The code below is generated by the tool at http://dl.acm.org/ccs.cfm.
%% Please copy and paste the code instead of the example below.
%%
\begin{CCSXML}
<ccs2012>
   <concept>
       <concept_id>10002978.10002991.10002992.10003479</concept_id>
       <concept_desc>Security and privacy~Biometrics</concept_desc>
       <concept_significance>500</concept_significance>
       </concept>
   % <concept>
   %     <concept_id>10002978.10003029</concept_id>
   %     <concept_desc>Security and privacy~Human and societal aspects of security and privacy</concept_desc>
   %     <concept_significance>300</concept_significance>
   %     </concept>
 </ccs2012>
\end{CCSXML}

\ccsdesc[500]{Security and privacy~Biometrics}
% \ccsdesc[300]{Security and privacy~Human and societal aspects of security and privacy}

%%
%% Keywords. The author(s) should pick words that accurately describe
%% the work being presented. Separate the keywords with commas.
\keywords{Face Recognition System; Impersonation Attack; MasterFaces}
%% A "teaser" image appears between the author and affiliation
%% information and the body of the document, and typically spans the
%% page.

% \received{20 February 2007}
% \received[revised]{12 March 2009}
% \received[accepted]{5 June 2009}

%%
%% This command processes the author and affiliation and title
%% information and builds the first part of the formatted document.
\maketitle

\section{Introduction}

Face recognition systems (FRSs)~\cite{zhao2003face, schroff2015facenet, deng2019arcface} have been widely adopted in real-world applications for seamless and user-friendly authentication, such as access control in buildings~\cite{DHSBACS} or airports~\cite{starAlliance}, and identity verification in financial services~\cite{FIDO2}.
In many such deployments, FRSs are implemented using commercial API services, e.g., AWS Rekognition~\cite{rekognition} and Tencent Cloud~\cite{tencent}, which serve as a backbone of large-scale real-world systems~\cite{AWSUseCases, TencentUseCases}.
Standards for FRSs have also been well established, including ISO/IEC documents for how to evaluate and analyze these systems in various scenarios~\cite{iso2021iso19795, iso2024iso24741, iso2023iso30107}, as well as benchmarks~\cite{NISTFRVT}.

The security of FRSs has been challenged by a wide range of attacks under various threat models~\cite{sharif2016accessorize, dong2019efficient, kim2024scores, kim2025non, li2023sibling, shahreza2024vulnerability, galbally2010vulnerability, jeong2022analysis, damer2018morgan, colbois2023approximating, an2023imu}.
Among these threats, the fundamental adversarial goal as an authentication system is \textit{impersonation}, where an adversary aims to be accepted as a victim identity enrolled in the system.
The risk against impersonation is typically captured by false match rate (FMR)~\cite{grother2013biometric}, a standard evaluation metric that quantifies the probability that two facial images from different users are incorrectly accepted as a match.
This reflects the \textit{zero-effort imposter}, which assumes no deliberate attempt at impersonation and thus represents the weakest adversary, serving as the foundational security baseline for FRSs.
Accordingly, several standards and benchmarks place a heavy emphasis on measuring and reporting FMR, and often discuss typical operating points for different application contexts~\cite{grother2013biometric}.

In realistic FRS deployments, the adversarial capability is highly constrained due to operational and security considerations.
Adversaries typically have no access to internal system information, e.g., model parameters or prior knowledge of the enrolled identity.
In addition, only a highly restricted form of interaction with the system is permitted, such as obtaining decisions from a small, fixed number of authentication trials.
These constraints substantially limit many attack strategies, which rely on high-fidelity information from the system through rich interaction~\cite{mai2018reconstruction, kim2024scores, shahreza2024vulnerability, galbally2010vulnerability, jeong2022analysis, sharif2016accessorize, an2023imu} or privileged access to the target identity or enrollment~\cite{dong2019efficient, kim2025non, li2023sibling, damer2018morgan, colbois2023approximating}.
As a result, under such a restrictive setting, the aforementioned zero-effort impostors have served as a baseline, and they are widely regarded as capturing the practical security level of modern FRS deployments, as well as biometric authentication systems and derived security protocols~\cite{katsumata2021revisiting, boldyreva2025may, bauspiess2024brake}.

A few years ago, MasterFace-based impersonation attacks~\cite{nguyen2020generating, shmelkin2021generating} challenged this conventional understanding by exploiting non-uniformity in biometric distributions.
They demonstrated that carefully synthesized faces---called MasterFaces---can be matched with multiple identities and exploited for impersonation by facilitating false matches.
Although these attacks received considerable attention at that time, multiple follow-up studies showed that they were effective for outdated FRSs only; for recent state-of-the-art FRSs under a realistic threat model, they could not achieve impersonation success rates beyond the FMR~\cite{terhorst2022limited, nguyen2022master, friedlander2022generating}.
Thus, zero-effort impostors have remained the de facto baseline adversary, and to the best of our knowledge, it remains open whether it is possible to design a non-trivial impersonation attack under this setting.

\begin{figure*}[t]
    \centering
    \includegraphics[width=\linewidth]{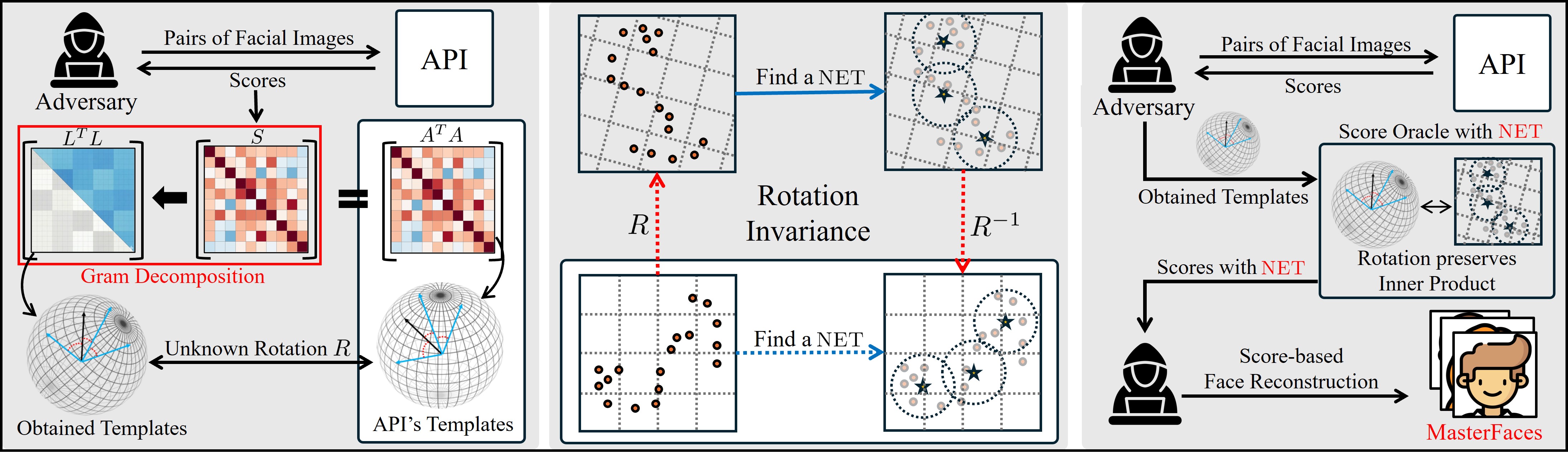}
    \vspace{-22pt}
    \caption{
    Overview of our attack pipeline. 
    Left: The adversary obtains rotated templates via the Gram Decomposition. 
    Middle: The adversary can obtain a rotated \net due to its rotation invariance. 
    Right: The adversary finally recovers MasterFaces from the found \net via a score-based reconstruction method. 
    White regions are not accessible to the adversary. Best viewed in color.
    }
    \label{fig:overview}
    \vspace{-7pt}
\end{figure*}

\subsection{Our Contribution}
In this paper, we show that the adversary can \textit{legitimately} leverage public commercial APIs to amplify the impersonation capability of MasterFaces, thereby enabling a non-trivial impersonation attack beyond the zero-effort impostor baseline under the same authentication trial budget.
Here, ``legitimately'' means that the adversary purchases the public API and issues queries to obtain similarity scores between each pair of chosen facial images, without requiring any additional privilege on the target FRS or auxiliary information on the enrolled identity.
We observe that several FRS services utilize commercial APIs as a backend, such as Amazon Rekognition~\cite{rekognition} or Tencent Cloud~\cite{tencent}, and it is typically publicly disclosed which backend API was used~\cite{AWSUseCases, TencentUseCases}.
As a result, the adversary can purchase the same \textit{pay-as-you-go} API\footnote{\url{https://aws.amazon.com/rekognition/pricing/}} with the target FRS.

Motivated by this observation, we design a novel algorithm to find MasterFaces tailored to a target FRS.
We formalize MasterFaces as the samples whose corresponding templates---i.e., feature embeddings extracted by an FRS---collectively maximize coverage over the face template distribution.
We call this set of templates a \net\footnote{This is an analogy to casting the net at the best plausible spot for fishing.}.
This formalization allows us to exploit geometric properties of the template space, including isometries, and reduces \net construction to a maximum coverage problem (MCP), a classical combinatorial optimization problem whose polynomial-time approximation is well-known~\cite{nemhauser1978analysis}. 
Furthermore, we show that the adversary can obtain templates of chosen facial images from a public API up to an unknown isometry.
We find that these transformed features suffice to construct an API-tailored \net, and more importantly, faces corresponding to \net can be reconstructed using existing score-based reconstruction attacks~\cite{kim2024scores} as a subroutine.

We implement our attack and conduct extensive experimental analyses against various open-source and Amazon Rekognition API-based FRSs.
When the authentication trial budgets range from 5 to 30, reflecting typical rate limits in real-world deployments, our attack shows consistently higher impersonation rates compared to the baseline zero-effort impostor by 2.03$\times$--8.17$\times$ and 1.75$\times$--9.50$\times$ for the FMR of $10^{-4}$ and $10^{-5}$, respectively.
Note that the adversary can launch our attack within a realistic API usage budget, spending at most \$100 according to their pricing policy.
Our results highlight that MasterFaces-based impersonation attacks can be revived through the legitimate exploitation of publicly available information about backend API usage, underscoring the need for greater caution in system-level information disclosure.

We summarize our contribution as follows:
\begin{itemize}
    \item We present a non-trivial impersonation attack beyond the standard FMR baseline, showing that MasterFaces become effective when the adversary can legitimately purchase the same public commercial APIs as the target FRS.
    
    \item We formalize MasterFaces by constructing a \net, which allows for reducing the attack to the maximum coverage problem, leveraging the geometry of the template space.

    \item We show how to construct an API-tailored \net from legitimate API score queries, whose corresponding faces can be recovered from existing score-based reconstruction attacks.
    
    \item We implement our attack against multiple open-source and commercial API-based FRSs, demonstrating consistently higher impersonation rates compared to the zero-effort impostor baseline under authentication trials ranging from 5 to 30.

    \item We provide our source code on the Zenodo archive \url{https://zenodo.org/records/20765343} for reproducibility.
    To avoid potential misuse, we release only a minimal part of the code.
\end{itemize}

\subsection{Technical Overview}

MasterFace attacks aim to find a face that can be identified as many different identities as possible, expecting that such a face would be likely to succeed in impersonation as an unknown target identity.
However, previous methods search MasterFaces over the latent space of the generative models~\cite{karras2019style}, which often results in sub-optimal solutions.
Subsequent studies further showed that such MasterFaces heavily rely on the FRS and would not be effective for modern sophisticated FRSs~\cite{nguyen2022master, terhorst2022limited, friedlander2022generating}.
Consequently, against unknown modern FRSs, their MasterFaces were barely effective, i.e., no better than the zero-effort impostor baseline~\cite{terhorst2022limited}.

To address these limitations, our attack consists of two core ingredients: \net-based formulation of MasterFaces and exploiting legitimate API queries for crafting API-tailored MasterFaces.
We visualize an overview of our attack's pipeline in Fig.~\ref{fig:overview}.

\subsubsection*{\bf Revisiting MasterFaces over Template Space}
Instead of directly finding MasterFaces over the face image space, we consider the template space, where we can leverage its geometric structure.
From this perspective, we can observe that the corresponding templates of MasterFaces should maximize the coverage of the face templates from the given dataset.
We focus on finding such a set of vectors over the template space, which we define as a \net.
Finding a \net is equivalent to solving the MCP, and this \net would serve a role as the optimal MasterFaces over the template space.

\subsubsection*{\bf Exploiting Legitimate API Queries}
To proceed with the impersonation attack via \net, the adversary must recover the corresponding facial images.
Notably, \net itself still heavily relies on the template space, so faces corresponding to \net may not be effective in an unknown target FRS.
We observe that, if the adversary can purchase the backend API of the target FRS, it becomes possible to construct an API-tailored \net and recover corresponding MasterFaces, thereby addressing all these issues.

Such APIs typically return a confidence score for a pair of queried facial images, and prior studies~\cite{knoche2023explainable, kim2024scores, kim2025non} have shown that the corresponding cosine similarity, i.e., an inner product value between unit vectors, can be inferred by the adversary.
From this setting, our key observation is that the adversary can decompose the score matrix, which consists of the inner product values of the templates.
More precisely, if we denote $S$ as the score matrix, then $S=A^{T}A$ for the matrix $A$ whose columns consist of templates extracted by the API.
By applying the Gram decomposition, the adversary obtains $L$ such that $A^{T}A = L^{T}L$. 
A classical linear algebra theory ensures that $L=RA$ for some unknown rotation matrix $R$.
This further allows the adversary to obtain a \textit{rotated} template of the chosen facial image, because for the template $z$ extracted by the API, the adversary can obtain $z^{T}A$ through queries and $z^{T}A=(Rz)^{T}(RA)$ holds; we can deduce that $Rz = (z^{T}A \cdot L^{-1})^{T}$.

Note that the adversary does not know the exact template because the rotation is unknown.
Nevertheless, finding MasterFaces is still possible because the coverage is invariant under isometries.
That is, the found \net from rotated templates is equal to rotating the \net from the original templates by the same amount.
Hence, when applying the Gram decomposition technique again, the adversary can obtain the score between the MasterFaces corresponding to \net and any chosen facial image.
This directly enables the adversary to apply score-based attacks on \net, which reconstruct the face image from score queries with it.
We remark that all these processes remain within our threat model because no further privilege or interaction with the target identity is required.

\section{System and Threat Model}

We first clarify the system and the threat model we are considering.
To this end, we briefly introduce the FRS and the impersonation attack, and then formalize our problem setting with justifications in real-world deployment scenarios.

\subsubsection*{\bf Notation} 
For a set $\mathcal{X}$ and the metric function $d_{\mathcal{X}}: \mathcal{X} \times \mathcal{X} \rightarrow \mathbb{R}_{\ge 0}$, we denote $(\mathcal{X}, d_{\mathcal{X}})$ as the metric space. 
For a distribution $\mathcal{D}$, we denote $x \gets \mathcal{D}$ as the random sampling.
In particular, for a finite set $\mathcal{S}$, we denote $x \xleftarrow{\$} \mathcal{S}$ as the random uniform sampling over $\mathcal{S}$.
We denote $\mathbb{R}^{d-1}$ and $\mathbb{S}^{d-1}$ as $d$-dimensional Euclidean space and $(d-1)$-dimensional hypersphere embedded into $\mathbb{R}^{d}$, respectively.
We denote $\langle \cdot, \cdot \rangle$ as a standard inner product over $\mathbb{R}^{d}$.
We denote a vector as a column-wise, i.e., $d \times 1$ matrix.

\subsection{Backgrounds on FRSs}

\subsubsection*{\bf Face Recognition Systems}
Let $\mathcal{B}$ be the set of faces and $(\mathcal{X}, d_{\mathcal{X}})$ be a metric space. 
We denote a function $F: \mathcal{B} \rightarrow \mathcal{X}$ as a \textit{template extractor}.
In this context, we call each element of $\mathcal{X}$ a face template.
We expect that $F$ preserves the implicit similarity between faces, i.e., facial images from the same identity are mapped to close templates, or vice versa.
With a template extractor $F$, we define a face recognition system (FRS) as a pair of algorithms $(\mathsf{Enroll}, \mathsf{Auth})$ corresponding to enrollment and authentication, respectively.
Each algorithm has the following functionality:
\begin{itemize}
    \item $\mathsf{Enroll}$ takes a facial image $\mathfrak{b} \in \mathcal{B}$ and returns a corresponding template $t := F(\mathfrak{b})$. Later, the FRS stores $t$ in its database.

    \item $\mathsf{Auth}$ takes a facial image $\mathfrak{b}' \in \mathcal{B}$ and the stored template $t$, returning a decision $0/1$ corresponding to ``reject'' and ``accept''. For a pre-determined threshold $\tau > 0$, this is typically implemented by $\mathds{1}(d_{\mathcal{X}}(F(\mathfrak{b}'), t) < \tau)$.
\end{itemize}
For the ease of explanation, at this moment, we consider the FRS as a 1:1 verification system.
Our analysis can be naturally extended to a 1:N identification system, which will be discussed in Section~\ref{sec:extensions}.

We note that many state-of-the-art FRSs~\cite{schroff2015facenet, deng2019arcface, kim2022adaface, kim2024keypoint, dan2024topofr} utilize $\mathcal{X} = \mathbb{S}^{d-1}$ and $d_{\mathcal{X}}$ as the cosine distance, which is defined by $d_{\cos}(x,y) := 1 - \langle x, y \rangle$ for $x, y \in \mathbb{S}^{d-1}$.
Since our primary focus is on such FRSs, we consider the template space $(\mathcal{X}, d_{\mathcal{X}})$ as $(\mathbb{S}^{d-1}, d_{\cos})$ in later sections.

\subsubsection*{\bf False Match Rate and Impersonation Attack}
Informally speaking, the FMR of the given FRS refers to the probability that a pair of biometrics from different identities is determined as the same identity.
Since the authentication algorithm is parametrized by the threshold $\tau$, so is FMR; a smaller $\tau$ gives a smaller FMR.
In standard benchmarks like IJB-C testsuite~\cite{maze2018iarpa}, FMR at the threshold $\tau$ is measured by $\mathsf{FMR}(\tau):= \frac{1}{|\mathcal{P}|}\sum_{(\mathfrak{b},\mathfrak{b}')\in \mathcal{P}} \mathds{1}(d_{\mathcal{X}}(F(\mathfrak{b}), F(\mathfrak{b}')) < \tau)$, where $\mathcal{P}$ consists of pairs of biometrics from different identities in the benchmark set.

For the stored template $t$, we consider the impersonation attack as the problem of finding a facial image $\mathfrak{b}'$ such that $\mathsf{Auth}(\mathfrak{b}', t) = 1$, without knowing the stored template $t$.
The capability of the adversary depends on the attack scenarios; in usual real-world applications, the adversary does not know either the template extractor of the target FRS or the enrolled facial image of the target identity.
In this setting, we can consider a simple impersonation attack that solely relies on FMR, so-called the \textit{zero-effort impostor} mentioned in standards~\cite{grother2013biometric, iso2021iso19795}.
To impersonate the enrolled identity in the target FRS, the adversary selects one of the facial images and tries to authenticate with this sample.
Provided that the threshold used for the benchmark dataset well represents the whole population, as commonly assumed in the standards~\cite{maze2018iarpa}, the adversary's attack success rate is the same as FMR.

\subsubsection*{\bf FRS with Commercial API}
Several real-world FRSs leverage 3rd party API services, e.g., Amazon Rekognition~\cite{rekognition} or Tencent Cloud~\cite{tencent}, in the place of the template extractor.
When querying two facial images to these services, they return a confidence score that represents the similarity of the queried pairs in terms of a value between 0 and 1.
Note that the formula for confidence scores is not publicly available, but the API services provide preset thresholds for the confidence score.
In addition to this, these services allow storing the user's faces on their own cloud server, and only an authorized entity (e.g., FRS service provider) can access the server to obtain a confidence score between the queried faces and the enrolled ones.

\subsection{Problem Formulation}

\subsubsection*{\bf System Model}
We consider two entities: a user and the FRS service provider.
We assume that the user has already enrolled in the FRS service, and the service provider stores the template in its storage.
The functionalities of each party are as follows:
\begin{itemize}
    \item The user queries a facial image to the service provider, receiving a decision ``accept''or ``reject''.

    \item The service provider runs $\mathsf{FRS.Auth}$ on the queried facial image and the enrolled template, and return the decision result to the user.
\end{itemize}
As documented in NIST and ISO/IEC standards~\cite{grother2013biometric, iso2021iso19795}, we assume that there is a maximum number of queries made by the user.
In usual real-world deployments, the maximum number of allowed authentication trials is often strictly limited (e.g., to a small number such as tens of attempts), and the query limit can be refreshed periodically or upon successful authentication.

\subsubsection*{\bf Threat Model}
From the above setting, we consider an adversary that serves the role of the user, aiming to retrieve ``accept'' from the service provider within a limited number of queries.
We assume that the adversary has no prior knowledge of both the target system and the enrolled identity.
The adversary cannot gain access to any internal information, such as architecture, training dataset, and the parameters of the template extractor, the enrolled template stored in the database, and the decision threshold.
Nevertheless, we assume that the adversary can exploit some public face datasets, such as usual datasets for training template extractors~\cite{deng2019retinaface, zhu2021webface260m, karkkainen2021fairface}.

In particular, when the target FRS utilizes a commercial third-party API, regulatory requirements often mandate transparency regarding face data handling practices and, in some cases, the involvement of third-party service providers~\cite{voigt2017eu}.
In addition, the adversary could infer which third-party service is used by referring to publicly available use-cases of these services~\cite{AWSUseCases, TencentUseCases}.
Hence, the adversary can purchase the same API used by the target FRS and query pairs of its own facial images without knowing the target identity.
These API services are \textit{pay-as-you-go}, hence the adversary can make many more queries to the API service than direct authentication trials to the FRS.
Importantly, we note that access to such APIs does not grant any additional privileges over the target FRS.

To reflect the real-world application scenarios as an end user, we only consider adversaries with such a restricted capability.
As we briefly mentioned earlier, this setting renders existing attack methods inapplicable, as they require query capabilities beyond decision~\cite{chen2021real, kim2024scores, galbally2010vulnerability, sharif2016accessorize} or prior knowledge of the enrolled identity~\cite{damer2018morgan, kim2025non, dong2019efficient}.
In such a strong adversarial capability, the adversary is often assumed as the \textit{insider} of the system, such as a service developer, who can silently query the backbone template extractor or API to obtain decision scores.
Furthermore, several reconstruction attacks require the knowledge of the template, assuming that the adversary can obtain an unprotected template through hacking or a data breach~\cite{mai2018reconstruction, shahreza2024vulnerability, otroshi2023face}.
However, insider access to score outputs or direct exposure of face templates is often considered an unrealistic capability according to the operational scenario assumptions documented in NIST and ISO/IEC standards~\cite{grother2013biometric, iso2022iso24745, iso2025iso19792}.

In summary, these representative lines of work fundamentally rely on capabilities beyond the end-user adversary, except for the baseline \textit{zero-effort imposter} based on FMR~\cite{grother2013biometric}.
This motivates the need for new techniques under this constrained yet realistic setting.

\section{Our Attack}\label{sec:OurAttack}

We now present our impersonation attack.
We first briefly introduce MasterFace-based impersonation attacks and our viewpoint on understanding them as the problem of finding a \net, i.e., maximum coverage over the template space.
We then show that the legitimate API queries allow the adversary to build a \net that reflects the template distribution of the target FRS, whose corresponding faces can further be recovered by score-based reconstruction attacks.

\subsubsection*{\bf MasterFaces: A Brief Overview}
MasterFaces~\cite{nguyen2020generating, shmelkin2021generating, terhorst2022limited, nguyen2022master, friedlander2022generating}, a special case of the wolf attack~\cite{une2007wolf, otsuka2013wolf} for faces, aim to find a face that could be identified as many other identities as possible.
More formally, for a database $\mathtt{DB}$ of facial images, the Masterface for a template extractor $F$ with a threshold $\tau$ is defined as follows:
\begin{align}\label{eq:MF}
    \mathfrak{b}_{\mathrm{MF}} := {\arg\max}_{ \mathfrak{b} \in \mathcal{B}} \sum_{\mathfrak{v} \in \mathtt{DB}} \mathds{1}(d_{\mathcal{X}}(F(\mathfrak{b}),  F(\mathfrak{v})) < \tau).
\end{align}
After finding the MasterFaces from Eq.~\eqref{eq:MF}, the adversary conducts an impersonation attack by querying the MasterFace to the target FRS; if multiple attempts are allowed, then the adversary finds other MasterFaces, excluding the previously tried ones.

Despite its clear formulation, MasterFaces from prior studies exhibit several limitations, rendering impersonation attacks based on them no better than the FMR-based baseline.
First, to solve Eq.~\eqref{eq:MF}, all prior studies~\cite{nguyen2020generating, shmelkin2021generating, terhorst2022limited} relied on evolutionary algorithms over the latent space of the generative models, e.g., the latent space evolution algorithm~\cite{hansen2003reducing, bojanowski2017optimizing} with StyleGAN~\cite{karras2019style, karras2020analyzing}.
Furthermore, several studies have shown that the MasterFaces are too sensitive to the choice of $F$~\cite{terhorst2022limited, nguyen2022master, friedlander2022generating}, hence they are barely transferable, i.e., MasterFaces from the adversary's own template extractor $F$ are no longer effective to the unknown template extractor of the target FRS.
More importantly, Terh\"ost et al.~\cite{terhorst2022limited} pointed out that several works reported results on outdated FRSs even at that time; finding MasterFaces from the prior heuristic-based algorithms was no longer effective for modern state-of-the-art FRSs~\cite{boutros2022elasticface, meng2021magface, terhorst2023qmagface}.
For these reasons, impersonation attacks based on prior MasterFaces do not pose a serious security threat against modern FRSs.

\subsection{MasterFace Impersonation Attack via \net}\label{sec:net}

Let us denote $\mathfrak{b}$ the target identity's facial image enrolled into the target FRS.
Instead of directly searching over $\mathcal{B}$, we turn our attention to the template space $(\mathcal{X}, d_{\mathcal{X}})$, and consider the following problem: finding $Q$ templates where at least one of them is sufficiently close to the template $t$ of $\mathfrak{b}$.
To simplify the problem, likewise to previous MasterFaces, we assume that the distribution $\mathcal{D}$ of templates is publicly known, and the adversary knows the template extractor $F$.
We formalize the above problem as follows:
\begin{definition}\label{def:searchVrfy}
    For the template extractor $F$ of $\mathsf{FRS}$, let $\mathcal{D}$ be the distribution defined over $\mathcal{X}$ of templates extracted from $F$. We consider the following search problem for $t \leftarrow \mathcal{D}$:
    \begin{itemize}
        \item\texttt{[Instance]} The template extractor $F$, the threshold $\tau > 0$, and an integer $Q \in \mathbb{N}$.

        \item\texttt{[Problem]} Find a set $\{z_{1}, \dots, z_{Q}\}$ satisfying $\exists i \in [Q]$ such that $d_{\mathcal{X}}(z_{i}, t) < \tau$.
    \end{itemize}
\end{definition}
Note that solving the above problem per se does not imply the success of the impersonation attack in our threat model because (i) finding such a template does not imply obtaining the corresponding faces, and (ii) the template extractor $F$ is unknown to the adversary.
At this moment, we put these issues aside and focus on solving the simplified problem; we will address them in later sections.

\subsubsection*{\bf Finding \net from Maximum Likelihood Strategy}
To solve the problem in Definition~\ref{def:searchVrfy}, our key strategy is to maximize the likelihood that the target template $t$ belongs to at least one of the $\tau$-neighborhoods of $z_{i}$ over $\mathcal{X}$.
That is, we aim to find the best plausible covering $\cup_{i=1}^{Q} \mathbb{B}_{\mathcal{X}}(z_{i}; \tau)$ to capture $t$, where $\mathbb{B}_{\mathcal{X}}(z;\tau):= \{ w \in \mathcal{X}: d_{\mathcal{X}}(w,z) < \tau  \}$ for $z \in \mathcal{X}$.
We call the centers $\{z_{1}, \dots, z_{Q}\}$ of this covering \net, and finding such a \net can be written as the following optimization problem:
\begin{align}\label{eq:contopt}
    \text{Maximize}\quad  &\mathrm{Pr}[t \in \cup_{i=1}^{Q} \mathbb{B}_{\mathcal{X}}(z_{i}; \tau) \;|\; t \leftarrow \mathcal{D}] \\ 
    \text{s.t.}\quad &z_{1},\dots,z_{Q} \in \mathcal{X} \nonumber
\end{align}
This strategy implicitly leverages the fact that biometric templates are not uniformly distributed in practice, sharing the same insight as MasterFaces.
Hence, carefully chosen centers $z_{i}$ cover significantly more mass than na\"ive independent trials or just sampling non-overlapping balls. 

Directly solving Eq.~\eqref{eq:contopt} would be cumbersome because the distribution $\mathcal{D}$ cannot be expressed in a closed form.
To address this, we adopt a Monte-Carlo-style approximation to estimate the probability in Eq.~\eqref{eq:contopt}.
More precisely, for the set of biometrics $\mathtt{DB}$ of size $M$ sampled from $\mathcal{B}$, the adversary can regard the following quantity as the objective function in Eq.~\eqref{eq:contopt}:
\begin{align}
    \frac{1}{M}\Big| \{ \mathfrak{b} \in \mathtt{DB}: F(\mathfrak{b}) \in \cup _{i=1}^{Q}\mathbb{B}_{\mathcal{X}}(z_{i}; \tau) \} \Big |
\end{align}

That is, for the set $F(\mathtt{DB}):=\{F(\mathfrak{b}) : \mathfrak{b} \in \mathtt{DB}\}$, finding \net, i.e., $z_{1}, \dots z_{Q}$, in Eq.~\eqref{eq:contopt}, can be viewed as finding $Q$ elements of $\mathcal{X}$ whose $\tau$-neighborhoods maximally cover $F(\mathtt{DB})$.
This is a special instance of the maximum coverage problem (MCP)~\cite{nemhauser1978analysis}, a famous classical problem in combinatorial optimization.

\subsubsection*{\bf Solving Maximum Coverage Problem}

Exactly solving the MCP is known to be NP-hard.
Nevertheless, its polynomial-time approximations, such as greedy-based~\cite{nemhauser1978analysis}, have been proposed.
We design a hybrid algorithm that clusters the templates first and then applies the greedy algorithm.
In the first layer of the algorithm, the adversary runs a clustering algorithm, such as K-Means~\cite{lloyd1982least, dhillon2001concept}, obtaining $\kappa Q$ candidate clustering centers for some $\kappa > 1$.
Then, for some candidate threshold $\tau$, the adversary selects $Q$ final centers whose $\tau$-neighborhood balls contain the largest number of templates via greedy search.
Although the latter process would not give an optimal solution, the submodularity of the objective function of the MCP guarantees that this approach gives $(1-e^{-1})$-approximated solution in polynomial time~\cite{nemhauser1978analysis}.
We provide a complete description of the algorithm in Appendix~\ref{sec:MCPDetail}.

\subsubsection*{\bf Remark on the Threshold $\tau$}
At this moment, for the ease of explanation, we assume that the adversary uses the threshold parameter $\tau$ when constructing a \net.
In our threat model, however, $\tau$ is unknown to the adversary and will be treated as an additional tunable hyperparameter.

\subsection{Exploiting Public API Queries}\label{sec:APIquery}

We now show how to legitimately leverage the commercial API queries of the target FRS, addressing assumptions we admitted for the ease of explanation, i.e., (i) finding representative facial images corresponding to the found \net, and (ii) getting rid of the dependence of \net on $F$.
Note that the adversary can obtain the confidence scores from pairs of its own chosen faces only; this does not grant any access to the target FRS, its internal decision scores, or the enrolled identity. 

For the ease of explanation, we assume that commercial APIs return the cosine similarity, i.e., the inner product between (normalized) feature vectors, instead of the confidence score.
Several studies showed that such confidence scores can be approximated as a function of the cosine similarity, e.g., a logistic function~\cite{kim2024scores, kim2025non}.
We provide a detailed explanation in Section~\ref{sec:detail}.

\subsubsection*{\bf Gram Decomposition of Scores}
For a positive semi-definite matrix $P \in \mathbb{R}^{d \times d}$, Gram decomposition aims to find a matrix $L\in \mathbb{R}^{d \times d}$ such that $P = L^{T}L$.
Special cases of Gram decompositions are well-known, e.g., Cholesky decomposition or matrix square root derived from spectral decomposition.
In addition, if we obtain two decompositions $P = L^{T}L = M^{T}M$, then it is well-known that there always exists a unitary matrix $R\in \mathbb{R}^{d \times d}$ such that $L = RM$.

Keeping in mind these mathematical facts, our key observation is that the adversary can construct a score matrix $S \in \mathbb{R}^{d \times d}$ from score queries whose entries consist of the pairwise cosine similarity between the queried biometric samples.
Since the cosine similarity is equivalent to the inner product between two unit vectors, we can write $S = A^{T}A$ where $A \in \mathbb{R}^{d \times d}$ consists of the feature vector of queried biometric samples extracted from the API\footnote{Of course, the dimension of templates in the API is unknown to adversary. We will carefully handle this issue in Section~\ref{sec:detail}.}.
Hence, $S$ is positive semi-definite, and the adversary can apply the Gram decomposition on $S$ to obtain $S=L^{T}L$, which guarantees that $L = RA$ for some unitary matrix $R \in \mathbb{R}^{d \times d}$.
Although the adversary does not know both $A$ and $R$, since $R$ can be viewed as the \textit{isometry}, i.e., distance-preserving mapping, the adversary can learn the geometric structure of feature vectors up to a global isometry.

To be precise, let $\mathtt{DB}_{P} := \{ \mathfrak{b}_{1}, \dots, \mathfrak{b}_{d} \}$ be a set of facial images. 
The adversary can obtain scores via queries of pairs of facial images in $\mathtt{DB}_{P}$, namely, $\{ d_{\mathcal{X}}(F_{T}(\mathfrak{b}_{i}), F_{T}(\mathfrak{b}_{j})) \}_{i \neq j}$, where $F_{T}$ is the template extractor of the target API.
The adversary then constructs a score matrix $S = A^{T}A$ where $A$ is the matrix whose $i^{\text{th}}$ column consists of $F_{T}(\mathfrak{b}_{i})$.
Finally, the adversary computes the Gram decomposition on $S$, obtaining $L=RA$ for some unitary matrix $R$.

Once obtaining $L$, the adversary can obtain $R \cdot F_{T}(\mathfrak{b}^{\ast})$ for any $\mathfrak{v}$ with score queries for $\{d_{\mathcal{X}}(F_{T}(\mathfrak{v}), F_{T}(\mathfrak{b}_{i})) \}_{i=1}^{d}$.
More precisely, let $\mathtt{DB}_{N} := \{\mathfrak{v}_{i}\}_{i=1}^{D}$ be a set of biometrics the adversary wishes to obtain the features of the form $R \cdot F_{T}(\mathfrak{v}_{i})$. 
By exploiting score queries, the adversary can construct a matrix $S_{B} = B^{T}A \in \mathbb{R}^{D \times d}$, where $B \in \mathbb{R}^{d \times D}$ is a matrix whose $i^{\text{th}}$ column vector is $F_{T}(\mathfrak{b}_{i}^{\ast})$.
Along with the fact that $I = R^{T}R$, we can observe that 
\begin{align}\label{eq:GramTechnique}
    S_{B} \cdot L^{-1} = B^{T}(R^{T}R)A \cdot (RA)^{-1} = (RB)^{T}
\end{align}
That is, $RB = (S_{B} \cdot L^{-1})^{T} \in \mathbb{R}^{d \times D}$, provided that $L$ is full-rank\footnote{We will show how to handle this condition in Section~\ref{sec:detail}.}.

To sum up, for some unknown unitary matrix $R$, the adversary can obtain the \textit{rotated} feature vectors by $R$ extracted from the API.
Since all these obtained features share the same isometry $R$, the adversary can leverage these rotated features to extract distributional traits of the target API's template extractor.

\subsubsection*{\bf Isometry Invariance of \net}
With the rotated features by $R$, the adversary can still exploit our \net-based attack strategy because $R$ preserves the geometric structure up to a global isometry in $\mathcal{X} = \mathbb{S}^{d-1}$.
In particular, MCP instances are invariant under isometry. 
More precisely, let $\phi: \mathcal{X} \rightarrow \mathcal{X}$ be an isometry, i.e., $\phi(x) = Rx$. 
Then we can observe that 
\begin{align}
    F(\mathfrak{b}) \in \cup_{i=1}^{Q}\mathbb{B}_{\mathcal{X}}(z_{i}; \tau) \iff \phi (F(\mathfrak{b})) \in \cup_{i=1}^{Q}\mathbb{B}_{\mathcal{X}}(\phi (z_{i}); \tau) \nonumber
\end{align}
as $\phi$ preserves the distance relationship.
That is, if $z_{1}^{\ast}, \dots, z_{Q}^{\ast}$ is the solution over the set $\{F(\mathfrak{b}): \mathfrak{b} \in \mathcal{B}^{\ast}\}$, then $\phi(z_{1}^{\ast}), \dots, \phi(z_{Q}^\ast)$ is the solution over $\{\phi(F(\mathfrak{b})): \mathfrak{b} \in \mathcal{B}^{\ast}\}$.
Therefore, if the adversary finds a \net from rotated features by $R$, then it should be the rotated \net derived from the original features. 
To help understand, we illustrate the above observation in Fig.~\ref{fig:invariance}.

\begin{figure}[t]
    \centering
    \includegraphics[width=.95\linewidth]{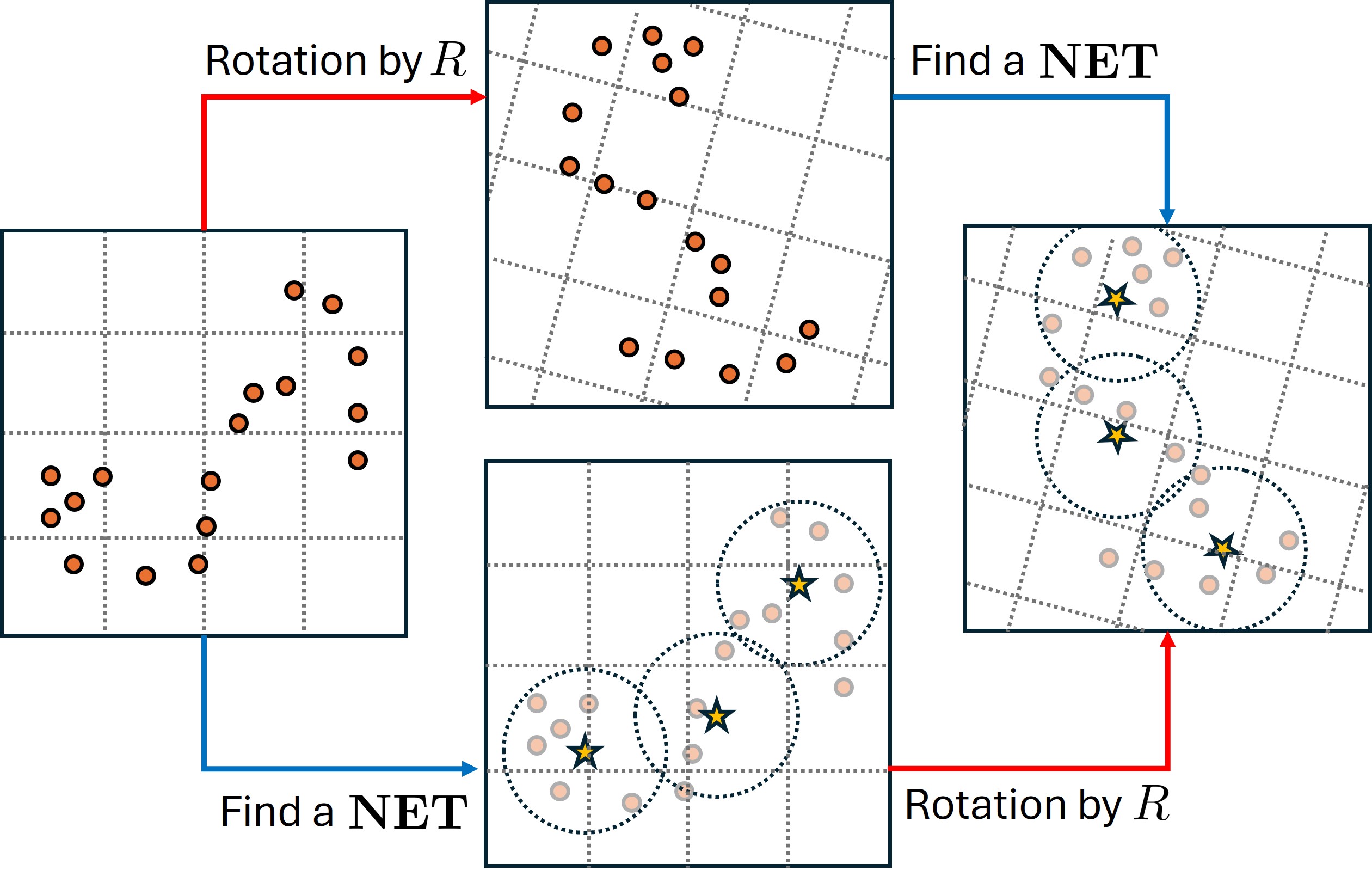}
    \vspace{-10pt}
    \caption{Isometry invariance of \net (Yellow stars) from samples (Orange dots) over the template space.}
    \label{fig:invariance}
    \vspace{-10pt}
\end{figure}

\subsubsection*{\bf Recovering Faces from \net}
Finally, the adversary obtains the rotated \net $\{R z_{1}^{\ast}, \dots, R  z_{Q}^{\ast} \}$ after solving the MCP.
Although the actual \net without rotation is still unknown, we observe that the adversary can obtain the cosine similarity between \net and the template $F_{T}(\mathfrak{v})$ of any arbitrary biometric samples $\mathfrak{v}$ through additional API queries, namely,
\begin{align}\label{eq:scoreOracle}
    \langle z_{i}^{\ast}, F_{T}(\mathfrak{v}) \rangle = \langle Rz_{i}^{\ast}, R \cdot F_{T}(\mathfrak{v}) \rangle = \langle Rz_{i}^{\ast}, (\! \underbrace{(F_{T}(\mathfrak{v})^{T} \cdot A)}_{\text{From API queries}}\!  \cdot L^{-1})^{T} \rangle
\end{align}
The last equality comes from Eq.~\eqref{eq:GramTechnique}.

This equality implies that the adversary can launch score-based reconstruction attacks~\cite{razzhigaev2021darker, vendrow2021realistic, jung2024face, kim2024scores}, which aim to reconstruct a facial image corresponding to the target identity through direct score queries to the target FRS.
Although these attacks per se are not applicable for impersonation in our threat model, in our attack pipeline, we can utilize them as a subroutine to recover representative biometric samples corresponding to \net by implicitly constructing a score oracle according to Eq.~\eqref{eq:scoreOracle}. 
As a result, both simplifying assumptions introduced for the ease of explanation, namely, access to the template extractor and direct instantiation of \net centers, are fully eliminated without granting any additional privilege over the target BAS or the enrolled identity.

\begin{figure}[t]
\fbox{
\parbox{.95\linewidth}{

\vspace{5pt}
\begin{center}
    % \textbf{Distribution-Aware Impersonation Attack}
    \textbf{\net-based MasterFace Impersonation Attack}
\end{center}
\vspace{5pt}

\textbf{Parameter}: The number of trials $Q \in \mathbb{N}$.

\textbf{Inputs}:
\begin{itemize}[leftmargin=10pt]
    \item API $\mathsf{API}(\mathfrak{b}, \mathfrak{v}) \rightarrow \langle F_{T}(\mathfrak{b}), F_{T}(\mathfrak{v}) \rangle$ purchased by the adversary.
    
    \item Face datasets $\mathtt{DB}_{P}=\{\mathfrak{b}_{1}, \dots, \mathfrak{b}_{d}\}$ and $\mathtt{DB}_{N}$.

    \item MCP solver $\mathsf{MCP}$ with a threshold $\tau_{\mathsf{MCP}} > 0$.

    \item Score-based attack $\mathsf{SBAttack}$ with a query budget $Q_{\mathsf{SB}}$.
\end{itemize}

\textbf{Output}: A set of facial images $\{\mathfrak{b}_{1}^{\ast}, \dots, \mathfrak{b}_{Q}^{\ast} \} \subset \mathcal{B}$.

\rule{\linewidth}{0.1mm}

\textbf{Step 1. Collect Rotated Features}
\begin{enumerate}[leftmargin = 20pt]
    
\item Construct $S=(S_{i,j}) \in \mathbb{R}^{d\times d}$, where $S_{i,j} \gets \mathsf{API}(\mathfrak{b}_{i}, \mathfrak{b}_{j})$.

\item Find $L \in \mathbb{R}^{d \times d}$ such that $S = L^{T}L$ via Gram decomposition.

\item For each $\mathfrak{v} \in \mathtt{DB}_{N}$, construct $\vec{u}_{\mathfrak{v}} \gets  (\mathsf{API}(\mathfrak{v}, \mathfrak{b}_{i}) )_{i=1}^{d}$ and compute $\vec{z}_{\mathfrak{v}} \gets (\vec{u}_{\mathfrak{v}}^{T} \cdot L^{-1})^{T}$

\end{enumerate}

\rule{\linewidth}{0.1mm}

\textbf{Step 2. Find a \net and Construct Score Oracles}
\begin{enumerate}[leftmargin = 20pt]
    
\item Find $\{w_{1}, \dots, w_{Q}\} \gets \mathsf{MCP}(\{\vec{z}_{\mathfrak{v}}: \mathfrak{v} \in \mathtt{DB}_{N} \}, \tau_{\mathsf{MCP}})$. 

\item For each $w_{i}$, construct a score oracle $\mathcal{O}_{w_{i}}: \mathcal{X} \rightarrow \mathbb{R}$ defined as $\mathcal{O}_{w_{i}}(\mathfrak{b}) := \{
\vec{z}_{\mathfrak{b}}\gets (\vec{u}_{\mathfrak{b}}^{T} \cdot L^{-1})^{T};\; \mathbf{return} \;\langle \vec{z}_{\mathfrak{b}}, w_{i} \rangle
\}$, where $\vec{u}_{\mathfrak{b}}=  (\mathsf{API}(\mathfrak{b}, \mathfrak{b}_{i}) )_{i=1}^{d}$ for $\mathfrak{b} \in \mathcal{B}$.

\end{enumerate}

\rule{\linewidth}{0.1mm}

\textbf{Step 3. Recover Biometrics via a Score-based Attack}
\begin{enumerate}[leftmargin = 20pt]

\item For each $w_{i}$, recover $\mathfrak{b}_{i}^{\ast} \gets \mathsf{SBAttack}(\mathcal{O}_{w_{i}}(\cdot), Q_{\mathsf{SB}})$ .

\item \textbf{Return} $\{ \mathfrak{b}_{1}^{\ast}, \dots, \mathfrak{b}_{Q}^{\ast} \}$.
\end{enumerate}
}
}
\vspace{-10pt}
\caption{Full Pipeline of the Proposed Attack}
\label{fig:pipeline}
% \Description[<short description>]{<long description>}

\vspace{-10pt}
\end{figure}

\subsection{Full Pipeline of the Attack}\label{sec:pipeline}
With the above-mentioned ideas, we present the full pipeline of the proposed impersonation attack based on \net.
Before starting the attack, the adversary first prepares public sets of biometrics $\mathtt{DB}_{P}$ and $\mathtt{DB}_{N}$.
The adversary can leverage public training datasets or their refinements, which would broadly reflect the population distribution.
With this dataset, the adversary first collects the rotated features via the Gram decomposition idea and API queries (Step 1), and then finds a \net by solving the MCP over the rotated features (Step 2).
Afterwards, the adversary now reconstructs the corresponding biometric samples through score-based attacks and submits them to the target FRS (Step 3).
We describe the full pipeline in Fig.~\ref{fig:pipeline}.
We will provide detailed explanations for each attack component in the next section.

\subsubsection*{\bf API Query Cost}

In our attack, the adversary makes API queries for (Step 1) and (Step 3).
For the former, constructing $S$ and $\{\vec{z}_{\mathfrak{v}}: \mathfrak{v} \in \mathtt{DB}_{N} \}$ require $\frac{d(d-1)}{2}$ and $|\mathtt{DB}_{N}| \cdot d$ queries, respectively.
On the other hand, the number of required queries in the latter depends on the choice of the score-based attack. 
As the adversary purchases the API as an end user, reducing API queries primarily affects the financial cost of the attack.

\section{Implementation Techniques and Extensions}

We present implementation-level details and extensions of the proposed \net-based impersonation attack.
First, we provide several details in our attack that we have postponed, including the rank deficiency of the score matrix $S$, handling unknown dimensionality of the templates, and the choice of the score-based attack.
In addition, we extend our attack to identification scenarios where many identities are enrolled in the system.

\subsection{Additional Techniques and Details}\label{sec:detail}

\subsubsection*{\bf From Confidence Scores to Cosine Similarity}
Many real-world API return confidence scores, and the adversary should convert them to cosine similarity values in order to conduct our attack.
More formally, for $x, y \in \mathcal{X}$ and an unknown confidence score function $g: \mathcal{X} \times \mathcal{X} \rightarrow [0,1]$, the adversary needs to find an invertible function $h: \mathbb{R} \rightarrow [0,1]$ such that $h(\langle x, y\rangle) = g( x, y )$.
To this end, recent studies~\cite{knoche2023explainable, kim2024scores, kim2025non} model $h$ as a logistic function, namely, $h(s) = \frac{L}{1 + e^{-k \cdot (s - d_{0})}} + b$ for learnable parameters $L, k, b, d_{0}$, and fit these parameters via the DOGBOX algorithm~\cite{voglis2004rectangular} with pairs of facial images from widely-used benchmark datasets, e.g., LFW~\cite{huang2008labeled, sengupta2016frontal} or CFP-FP.
As demonstrated by score-based attacks~\cite{kim2024scores, kim2025non}, this method was enough for launching their attack on commercial APIs, e.g., Amazon Rekognition~\cite{rekognition}.
We directly use the numbers provided in the open-source implementation of Kim et al.~\cite{kim2024scores}.
%\footnote{\url{https://github.com/Cryptology-Algorithm-Lab/Scores_Tell_Everything_about_Bob/blob/main/utils/AWS.py}}.

\subsubsection*{\bf Handling Rank Deficiency of $S$}
Recall that our attack is well-defined only when $S$ has a full rank.
However, in practice, $S$ is often rank-deficient because of highly correlated feature vectors among queried samples.
To compensate for this, we clamp the eigenvalue of the score matrix to be larger than $\epsilon=10^{-8}$.
In addition, to minimize the numerical error, we use the eigendecomposition of a symmetric matrix $S$; more precisely, we decompose $S = V\Sigma V^{T}$ and set $L = \Sigma^{1/2} V^{T}$.
We confirmed that the maximum error between $S$ and $L^{T}L$ is less than $10^{-7}$ in double precision.

\subsubsection*{\bf Unknown Dimensionality and Its Reduction}
In practice, the adversary does not know the true feature dimensionality of the API.
In particular, if the adversary uses a smaller dimensionality, say $k < d$, then the information loss would occur from the Gram decomposition, and the strict isometry invariance property of \net no longer holds\footnote{We do not consider $k\ge d$ because there will be no information loss.}.
Nevertheless, we observe that a similar argument in Section~\ref{sec:APIquery} still holds, discovering the relationship between the found \net and the facial images $\mathtt{DB}_{P}$ used for constructing $S$.

The key observation is that, if we consider the singular value decomposition of $A=U \Sigma V^{T} \in \mathbb{R}^{d \times k}$ for some $k < d$, where $U \in \mathbb{R}^{d \times k}$ and $\Sigma, V \in \mathbb{R}^{k \times k}$, then $A^{T}A \in \mathbb{R}^{k \times k}$ can be written as $(\Sigma V^{T})^{T}(\Sigma V^{T})$.
Note that $\Sigma V^{T} = U^{T}A$.
Hence, if we apply Gram decomposition on $A^{T}A$, obtaining $L\in \mathbb{R}^{k \times k}$ such that $A^{T}A = L^{T}L$, $L$ can be written as $RU^{T}A$ for some unitary matrix $R \in \mathbb{R}^{k \times k}$.
Such a decomposition satisfies the following properties\footnote{We provide the detailed proofs in Appendix~\ref{sec:Proofs}.}.
\begin{itemize}[leftmargin=*]
    \item For $z \in \mathbb{R}^{d}$, $RU^{T}z = (z^{T}A \cdot L^{-1})^{T}$. In particular, $z^{T}A = (RU^{T}z)^{T}L$.

    \item For $x, y\in \mathbb{R}^{d}$, $\langle RU^{T}x, \frac{RU^{T}y}{\|RU^{T}y\|_{2}} \rangle = \langle x, \frac{AA^{\dagger}y}{\|AA^{\dagger}y\|_{2}} \rangle$,
\end{itemize}
where $A^{\dagger}$ is the pseudoinverse of $A$. 

The former property tells us that the adversary can obtain feature vectors from the API transformed by $RU^{T}$, analogous to rotated features in Section~\ref{sec:APIquery}.
On the other hand, the latter property implies that, when finding a \net over the transformed set $\{RU^{T}\cdot F_{T}(\mathfrak{b}) : \mathfrak{b} \in \mathtt{DB}_{N}\}$, with a constraint that each clustered point $\{z_{1}, \dots, z_{Q}\} \subset \mathbb{R}^{k}$ has a norm of 1, then it maximizes
\begin{align}
    \left | \left\{ \mathfrak{b} \in \mathtt{DB}_{N}: 
     F_{T}(\mathfrak{b}) \in \bigcup_{i=1}^{Q} \mathbb{B}_{\mathcal{X}} \left( \frac{AA^{\dagger}y_{i}}{\|AA^{\dagger} y_{i} \|_{2}}; \tau \right)
    \right\} \right|; \quad y_{i} = UR^{T}z_{i}.\nonumber
\end{align}
That is, the found \net lies in the column space of $A$, corresponding to the subspace spanned by $\mathtt{DB}_{P}$.
In summary, even under dimensionality reduction, the adversary can still construct an effective \net aligned with the template space of the API's template extractor.

\subsubsection*{\bf Score-based Reconstruction Attack}
The adversary should recover precise face images corresponding to the found \net to reduce the error from the mismatch between templates and recovered faces. 
To this end, we utilize a score-based reconstruction attack by Kim et al.~\cite{kim2024scores}, which can be processed without adaptive score queries.
In particular, thanks to non-adaptiveness, their attack can be seamlessly incorporated into our attack without requiring additional API queries, as we can reuse rotated templates from previous queries.
Furthermore, we utilize a black-box optimization algorithm as a post-processing step, setting the reconstructed faces from Kim et al.'s attack as a starting point.
More precisely, we apply natural evolution strategy (NES)~\cite{wierstra2014natural}, a widely adopted method in black-box adversarial attacks~\cite{ilyas2018black, chen2021real}.
We provide detailed explanations of Kim et al.'s attack and the NES postprocessing in Appendix~\ref{sec:SBDetail}.

\subsection{Extension to Face Identification Systems}\label{sec:extensions}

Throughout the paper, we mainly considered the verification scenario, where only one identity is enrolled in the system.
In contrast, face identification scenarios are also widely adopted in practical application scenarios of FRS deployments, where several identities are enrolled in the system, and for the given query, the system determines whether the corresponding identity is already enrolled in the system or not, such as the access control scenarios~\cite{DHSBACS, starAlliance}.
The identification scenario for 1:N matching is also included in the NIST FRVT benchmark~\cite{NISTFRVT}. 
The standard evaluation metric analogous to FMR in this setting is false positive identification rate (FPIR)~\cite{iso2021iso19795, NISTFRVT}, which is the probability that the system incorrectly determines an unauthorized identity as one of the enrolled ones.

We can naturally extend our attack to analyze FRSs in these identification scenarios.
The core idea is that we can view the matching test with each enrolled identity as checking whether it is covered by \net or not.
Analogous to Definition~\ref{def:searchVrfy}, let us consider the following search problem for the identification scenario:
\begin{definition}\label{def:searchID}
    For the template extractor $F$ of $\mathsf{FRS}$, let $\mathcal{D}$ be the distribution defined over $\mathcal{X}$ of templates extracted from $F$. We consider the following search problem for i.i.d. samples $t_{1},\dots,t_{N} \leftarrow \mathcal{D}$:
    \begin{itemize}[leftmargin=*]
        \item\texttt{[Instance]} The template extractor $F$, the threshold $\tau > 0$, and integers $Q, N \in \mathbb{N}$.

        \item\texttt{[Problem]} Find a set $\{t_{1}, \dots, t_{Q}\}$ satisfying $\exists i \in [Q], j\in[N]$ such that $d_{\mathcal{X}}(z_{i}, t_{j}) < \tau$.
    \end{itemize}    
\end{definition}
If we denote $\rho := \mathrm{Pr}[t \in \cup_{i=1}^{Q} \mathbb{B}_{\mathcal{X}}(z_{i}; \tau) \;|\; t \leftarrow \mathcal{D}]$ for the found \net $\{z_{1}, \dots, z_{Q}\}$ from Section~\ref{sec:net}, by the pairwise independence of $t_{1}, \dots, t_{N}$, we can deduce the following equality:
\begin{align}\label{eq:ID}
    \mathrm{Pr}[\exists i\in[Q], j \in[N] \text{ s.t. } d_{\mathcal{X}}(z_{i},t_{j}) < \tau] =  1 - (1-\rho)^{N}.
\end{align}
This corresponds to the probability where at least one of the facial images corresponds to \net wrongly identified as the enrolled member to the target system.
Since $1-x \le e^{-x}$, the L.H.S. of the above Eq.~\eqref{eq:ID} is at least $1-e^{-N\rho}$. 
That is, once $\rho$ is obtained and the desired level of FPIR, say $\xi$, is given, then we can estimate the upper bound of the maximum number of identities handled by the system as $N \le \rho^{-1}\ln(\frac{1}{1-\xi})$.

\section{Experimental Results}

We present experimental results of the proposed attack.
All the source code is written in PyTorch and NumPy.
For reproducibility, we publicly release our source code\footnote{\url{https://zenodo.org/records/20765343}}.
Given the potential for misuse, we deliberately release only a minimal yet essential part of our attack pipeline, including the MCP solver, score-based attack pipeline, and the overall attack pipeline.
The released implementation is not sufficient for directly attacking real-world FRSs, and we did not attempt to conduct the attack against real-world deployments. 

\subsection{Experimental Settings}\label{sec:expsetting}
\subsubsection*{\bf Evaluation Criteria and Baseline Setup}
To reflect realistic operational parameters of FRS deployments, we consider the following FMR levels: $10^{-3}$, $10^{-4}$, and $10^{-5}$.
For each FRS, we determine the decision threshold for each FMR level as follows: given a dataset consisting of facial images from distinct identities, we compute the similarity score of every pair of images and select the threshold such that the proportion of pairs exceeding the threshold matches the target FMR.
This procedure follows standard evaluation practices, e.g., NIST FRVT benchmark~\cite{NISTFRVT}.
In addition, considering typical rate-limiting scenarios, we set the maximum number of authentication trials as $Q=5$, $10$, and $30$.

Throughout experiments, we evaluate the impersonation rate (IR) of the adversary by measuring the number of successful authentications via one of $Q$ facial images chosen by the adversary.
More precisely, if the adversary selects $Q$ facial images $\{\mathfrak{v}_{1}, \dots, \mathfrak{v}_{Q}\}$, for target face dataset $\{ \mathfrak{b}_{1}, \dots, \mathfrak{b}_{n} \}$ and the decision threshold $\tau$ of the target FRS $F_{T}$, we measure
\begin{align}
    \mathrm{IR}:= \frac{|\{ \mathfrak{b}_{i}: \exists j \text{ s.t. } \langle F_{T}(\mathfrak{b}_{i}), F_{T}(\mathfrak{v}_{j}) \rangle > \tau   \}|}{n} \nonumber.
\end{align}

As a baseline, we consider the zero-effort impostor, as prior studies have shown that no existing MasterFaces-based attacks showed better impersonation rate than such an adversary under our threat model~\cite{terhorst2022limited, nguyen2022master, friedlander2022generating}.
This adversary randomly samples $Q$ facial images from the public facial dataset $\mathtt{DB}$.
Note that, in this case, the upper bound of the impersonation rate can be roughly estimated by $Q \cdot \mathsf{FMR}(\tau)$, where $\mathsf{FMR}(\tau)$ is the FMR of the target FRS at the threshold $\tau$. 
Hence, we report both the exact IR and its estimation from the baseline adversary.

% \begin{table}[t]
%     \caption{Placeholder for the specification of target FRSs \red{To be filled}}
%     \vspace{-10pt}
%     \centering
%     \begin{tabular}{c|c}
%          &  \\
%          & 
%     \end{tabular}
%     \label{tab:targetFRS}
%     \vspace{-10pt}
% \end{table}

\begin{table}[t]
    \caption{Description of FRSs and the local inverse model.}
    \vspace{-10pt}
    \resizebox{.95\linewidth}{!}{
    \begin{tabular}{ccc|c}%ccc}
    \hline
    \multicolumn{3}{c|}{Open-source FRS / Inverse Model} & Train Dataset \\ \hline %& \multicolumn{3}{c}{TAR@FAR(\%)} \\ \hline 
    \multicolumn{1}{c|}{Notation} & \multicolumn{1}{c|}{Architecture} & Loss & Name \\ \hline \hline %& \multicolumn{1}{c|}{LFW} & \multicolumn{1}{c|}{CFP-FP} & AgeDB \\ \hline \hline
    \multicolumn{1}{c|}{$F_{1}$} & \multicolumn{1}{c|}{ResNet-100} & ArcFace~\cite{deng2019arcface} & Glint360k~\cite{an2021partial} \\ \hline % & \multicolumn{1}{c|}{99.70@0.00} & \multicolumn{1}{c|}{98.71@0.06} & 97.47@0.87\\ \hline
    \multicolumn{1}{c|}{$F_{1}^{-1}$} & \multicolumn{1}{c|}{NbNet-B} & Perceptual~\cite{mai2018reconstruction} & MS1MV3~\cite{DBLP:conf/iccvw/DengGZDLS19} \\ \hline % & \multicolumn{1}{c|}{N/A}  & \multicolumn{1}{c|}{N/A} & N/A \\ \hline
    \multicolumn{1}{c|}{$F_{2}$} & \multicolumn{1}{c|}{ResNet-100} & ElasticFace~\cite{boutros2022elasticface} & MS1MV2~\cite{deng2019arcface} \\ \hline % & \multicolumn{1}{c|}{N/A} & \multicolumn{1}{c|}{N/A} & N/A \\ \hline
    % \multicolumn{1}{c|}{$F_{2}^{-1}$} & \multicolumn{1}{c|}{NbNet-B} & Perceptual & MS1MV3~\cite{DBLP:conf/iccvw/DengGZDLS19} \\ \hline % & \multicolumn{1}{c|}{N/A} & \multicolumn{1}{c|}{N/A} & N/A \\ \hline
    \multicolumn{1}{c|}{$F_3$} & \multicolumn{1}{c|}{IResNet-101} & ArcFace~\cite{deng2019arcface} & WebFace4m~\cite{zhu2021webface260m} \\ \hline %& \multicolumn{1}{c|}{99.73@0.07} & \multicolumn{1}{c|}{98.74@0.20} & 97.33@1.17\\ \hline
    \multicolumn{1}{c|}{$F_4$} & \multicolumn{1}{c|}{Vit-KPRPE~\cite{kim2024keypoint}} & AdaFace~\cite{kim2022adaface} & WebFace12m~\cite{zhu2021webface260m} \\ \hline % &  \multicolumn{1}{c|}{99.67@0.00} & \multicolumn{1}{c|}{98.71@0.09} & 97.07@0.83\\ \hline
    \multicolumn{1}{c|}{$F_\mathsf{A}$} & \multicolumn{3}{c}{AWS CompareFaces API~\cite{rekognition}} \\ \hline
    % \multicolumn{1}{c|}{$F_\mathsf{T}$} & \multicolumn{6}{c}{Tencent CompareFace API} \\ \hline
    \end{tabular}
    }
    \vspace{-10pt}
%\vspace{-3mm}
\label{tab:targetFRS}
\end{table}

\subsubsection*{\bf Face Datasets and Target FRSs}
We select various open-source and commercial FRSs as targets.
For open-source models, we adopt recent sophisticated FRS, including ElasticFace~\cite{boutros2022elasticface} $(F_{2})$, ArcFace~\cite{deng2019arcface} $(F_{3})$, and AdaFace~\cite{kim2022adaface} $(F_{4})$.
Their implementations and pre-trained parameters are all publicly available; for ElasticFace, we refer to their GitHub repository\footnote{\url{https://github.com/fdbtrs/ElasticFace}}, and we utilize \texttt{insightface}~\cite{insightface} and \texttt{CVLFace}~\cite{CVLface} libraries for ArcFace and AdaFace, respectively.
On the other hand, for the commercial API, we select the CompareFaces API~\cite{rekognition}, a part of the Amazon Rekognition API that provides the confidence score between 0 and 100 for a pair of queried faces.
We note that, to instantiate Kim et al.'s attack~\cite{kim2024scores}, the adversary needs to select its own FRS $(F_{1})$, along with its inverse model ($F_{1}^{-1}$).
We select such an FRS from \texttt{insightface}, ensuring that at least two of the architecture, loss, and dataset are different from the target FRSs $F_{2}$--$F_{4}$.
We use the NbNet~\cite{mai2018reconstruction} as an inverse model, following Kim et al.~\cite{kim2024scores}.
We summarize these target FRSs in Table~\ref{tab:targetFRS}.

% \red{SH: Need to explain the choice of $\mathtt{DB}_{P}$ somewhere in this paragraph.}

When constructing a \net ($\mathtt{DB}_{N}$), we use a subset of the MS1MV3 dataset~\cite{deng2019retinaface} by sampling the first image over 93,431 identities in total.
For the facial images corresponding to those of the target enrolled identities, we select two datasets: LFW~\cite{huang2008labeled} and CASIA-WebFace~\cite{yi2014learning}.
More precisely, for LFW, we use the faces in their verification suite, excluding redundant images.
For CASIA-WebFace, we select the first image of each identity, hence 10,572 images in total.
For $\mathtt{DB}_{P}$, we use the facial images from the inverse model output of PCA components of the WebFace42M dataset~\cite{zhu2021webface260m}.
We provide the detailed specifications of target FRSs and face datasets, including the measured thresholds of target FRSs for each FMR level, in Appendix~\ref{sec:ExpDetail}.

\begin{table}[t]
\centering
\caption{IR (\%) results on $F_2$ at various FMR levels.  
For \textbf{Ours}, the second row in each cell, highlighted as \textbf{\color{blue}bold}, indicates the improvement factor over the baseline.}
\vspace{-10pt}
\resizebox{1.0\linewidth}{!}{
\begin{tabular}{c|c|ccc|ccc}
\hline
\multirow{3}{*}{$Q$} & \multirow{3}{*}{Method} & \multicolumn{6}{c}{Target Dataset} \\ \cline{3-8} 
 &  & \multicolumn{3}{c|}{LFW} & \multicolumn{3}{c}{CASIA-WebFace} \\ \cline{3-8} 
 & & $10^{-3}$ & $10^{-4}$& \multicolumn{1}{c|}{$10^{-5}$} & $10^{-3}$ & $10^{-4}$& $10^{-5}$ \\ \hline \hline
\multirow{4}{*}{$5$}
 & $Q \cdot \mathsf{FMR}$ 
 & 0.5 & 0.05 & 0.005 & 0.5 & 0.05 & 0.005 \\ \cline{2-8}

 & Baseline 
 & 0.522 & 0.054 & 0.008 & 0.514 & 0.055 & 0.010 \\ \cline{2-8}

 & Ours
 & \begin{tabular}{c} 2.743 \\ \textbf{\color{blue}$\times$5.25} \end{tabular}
 & \begin{tabular}{c} 0.441 \\ \textbf{\color{blue}$\times$8.17} \end{tabular}
 & \begin{tabular}{c} 0.076 \\ \textbf{\color{blue}$\times$9.50} \end{tabular}
 & \begin{tabular}{c} 2.936 \\ \textbf{\color{blue}$\times$5.71} \end{tabular}
 & \begin{tabular}{c} 0.445 \\ \textbf{\color{blue}$\times$8.09} \end{tabular}
 & \begin{tabular}{c} 0.040 \\ \textbf{\color{blue}$\times$4.00} \end{tabular}
 \\ \hline

\multirow{4}{*}{$10$}
 & $Q \cdot \mathsf{FMR}$ 
 & 1 & 0.1 & 0.01 & 1 & 0.1 & 0.01 \\ \cline{2-8}

 & Baseline 
 & 1.041 & 0.109 & 0.017 & 1.025 & 0.111 & 0.021 \\ \cline{2-8}

 & Ours
 & \begin{tabular}{c} 4.741 \\ \textbf{\color{blue}$\times$4.55} \end{tabular}
 & \begin{tabular}{c} 0.550 \\ \textbf{\color{blue}$\times$5.05} \end{tabular}
 & \begin{tabular}{c} 0.049 \\ \textbf{\color{blue}$\times$2.88} \end{tabular}
 & \begin{tabular}{c} 5.210 \\ \textbf{\color{blue}$\times$5.08} \end{tabular}
 & \begin{tabular}{c} 0.788 \\ \textbf{\color{blue}$\times$7.10} \end{tabular}
 & \begin{tabular}{c} 0.112 \\ \textbf{\color{blue}$\times$5.33} \end{tabular}
 \\ \hline

\multirow{4}{*}{$30$}
 & $Q \cdot \mathsf{FMR}$ 
 & 3 & 0.3 & 0.03 & 3 & 0.3 & 0.03 \\ \cline{2-8}

 & Baseline 
 & 3.090 & 0.325 & 0.050 & 3.042 & 0.331 & 0.062 \\ \cline{2-8}

 & Ours
 & \begin{tabular}{c} 12.13 \\ \textbf{\color{blue}$\times$3.92} \end{tabular}
 & \begin{tabular}{c} 1.685 \\ \textbf{\color{blue}$\times$5.18} \end{tabular}
 & \begin{tabular}{c} 0.211 \\ \textbf{\color{blue}$\times$4.22} \end{tabular}
 & \begin{tabular}{c} 12.52 \\ \textbf{\color{blue}$\times$4.12} \end{tabular}
 & \begin{tabular}{c} 2.197 \\ \textbf{\color{blue}$\times$6.64} \end{tabular}
 & \begin{tabular}{c} 0.285 \\ \textbf{\color{blue}$\times$4.60} \end{tabular}
 \\ \hline
\end{tabular}
}
\vspace{-10pt}
\label{tab:open}
\end{table}

\subsubsection*{\bf Instantiation of Our Attack}

As we mentioned in Section~\ref{sec:detail}, we utilize score-based attacks to recover faces corresponding to \net.
We utilize Kim et al.'s attack~\cite{kim2024scores} based on their official implementation\footnote{\url{https://github.com/Cryptology-Algorithm-Lab/Scores_Tell_Everything_about_Bob}}, and we manually implement a latent space search variant of NES.
We provide the detailed pseudocodes for each attack algorithm and detailed parameter settings in Appendix~\ref{sec:SBDetail}.

Recall that our MCP solver is comprised of a combination of the K-Means and greedy algorithms.
To implement the K-Means, we use the \texttt{faiss} library~\cite{douze2025faiss} with GPU acceleration\footnote{\url{https://github.com/facebookresearch/faiss}}, finding $\kappa Q$ candidates of \net for some multiplier $\kappa > 1$.
After that, we apply the greedy search to obtain $Q$ centers whose $\tau_{\mathsf{MCP}}$-neighborhoods maximally cover the given dataset.
Detailed pseudocode of the MCP solver is provided in Appendix~\ref{sec:MCPDetail}.

% \red{SH: Need to be revised}.

\subsection{Attack Evaluation Results}

We present the IR results according to the settings mentioned above.
For the hyperparameters, $\tau_{\mathsf{MCP}}$ and $\kappa$, we use the values that yield the best IR results, as described in the following paragraphs.

\subsubsection*{\bf Open-Source FRS}
We first report IR results from our attack compared with the baseline.
We use the Gram decomposition dimension $k=400$, and enable the NES postprocessing.
We empirically found that $\tau_{\mathsf{MCP}} = 0.2$ and $\kappa = 10$ shows the best results.
Due to space constraints, we only provide the IR results for $F_{2}$; the other open-source FRSs ($F_{3}$, $F_{4}$) show a similar trend as $F_{2}$, and we provide the full experimental results in Appendix~\ref{sec:ExpDetail}.

The results are provided in Table~\ref{tab:open}.
From this table, we can observe that ours consistently achieves higher IRs across all the settings we considered, compared to those from the baseline and estimated from the FMR.
More precisely, when FMR is $10^{-3}$, $10^{-4}$, and $10^{-5}$, ours achieve $3.92\times$--$5.71\times$, $5.05\times$--$8.17\times$, and $2.88\times$--$9.50\times$ improvements, respectively, compared to the baseline.
Notably, we can observe that the relative IR increases for relatively small query budgets $Q=5,10$ and strict FMR levels $10^{-4}, 10^{-5}$, while the baseline shows almost similar results as those estimated from $Q \cdot \mathsf{FMR}$.
These results indicate that our attack becomes far more effective under realistic yet restrictive operational settings.

\begin{table}[t]
\centering
\caption{IR (\%) results on $F_\mathsf{A}$ at various FMR levels.  
For \textbf{Ours}, the second row in each cell, highlighted as \textbf{\color{blue}bold}, indicates the improvement factor over the baseline.}
\vspace{-10pt}
\resizebox{1.0\linewidth}{!}{
\begin{tabular}{c|c|ccc|ccc}
\hline
\multirow{3}{*}{$Q$} & \multirow{3}{*}{Method} & \multicolumn{6}{c}{Target Dataset} \\ \cline{3-8} 
 &  & \multicolumn{3}{c|}{LFW} & \multicolumn{3}{c}{CASIA-WebFace} \\ \cline{3-8} 
 &  & $10^{-3}$ & $10^{-4}$ & \multicolumn{1}{c|}{$10^{-5}$} 
    & $10^{-3}$ & $10^{-4}$ & $10^{-5}$ \\ \hline \hline

\multirow{4}{*}{$5$}
 & $Q \cdot \mathsf{FMR}$ 
 & 0.5 & 0.05 & 0.005 & 0.5 & 0.05 & 0.005 \\ \cline{2-8}

 & Baseline 
 & 0.493 & 0.049 & 0.007 & 0.558 & 0.055 & 0.008 \\ \cline{2-8}

 & Ours
 & \begin{tabular}{c} 1.365 \\ {\textbf{\color{blue}{$\times$2.77}}} \end{tabular}
 & \begin{tabular}{c} 0.143 \\ {\textbf{\color{blue}{$\times$2.92}}} \end{tabular}
 & \begin{tabular}{c} 0.046 \\ {\textbf{\color{blue}{$\times$6.57}}} \end{tabular}
 & \begin{tabular}{c} 1.764 \\ {\textbf{\color{blue}{$\times$3.16}}} \end{tabular}
 & \begin{tabular}{c} 0.158 \\ {\textbf{\color{blue}{$\times$2.87}}} \end{tabular}
 & \begin{tabular}{c} 0.014 \\ {\textbf{\color{blue}{$\times$1.75}}} \end{tabular}
 \\ \hline

\multirow{4}{*}{$10$}
 & $Q \cdot \mathsf{FMR}$ 
 & 1 & 0.1 & 0.01 & 1 & 0.1 & 0.01 \\ \cline{2-8}

 & Baseline 
 & 0.984 & 0.098 & 0.014 & 1.111 & 0.111 & 0.015 \\ \cline{2-8}

 & Ours
 & \begin{tabular}{c} 2.219 \\ {\textbf{\color{blue}{$\times$2.26}}} \end{tabular}
 & \begin{tabular}{c} 0.199 \\ {\textbf{\color{blue}{$\times$2.03}}} \end{tabular}
 & \begin{tabular}{c} 0.059 \\ {\textbf{\color{blue}{$\times$4.21}}} \end{tabular}
 & \begin{tabular}{c} 3.097 \\ {\textbf{\color{blue}{$\times$2.79}}} \end{tabular}
 & \begin{tabular}{c} 0.328 \\ {\textbf{\color{blue}{$\times$2.95}}} \end{tabular}
 & \begin{tabular}{c} 0.051 \\ {\textbf{\color{blue}{$\times$3.40}}} \end{tabular}
 \\ \hline

\multirow{4}{*}{$30$}
 & $Q \cdot \mathsf{FMR}$ 
 & 3 & 0.3 & 0.03 & 3 & 0.3 & 0.03 \\ \cline{2-8}

 & Baseline 
 & 2.920 & 0.294 & 0.041 & 3.290 & 0.332 & 0.045 \\ \cline{2-8}

 & Ours
 & \begin{tabular}{c} 5.218 \\ {\textbf{\color{blue}{$\times$1.79}}} \end{tabular}
 & \begin{tabular}{c} 0.817 \\ {\textbf{\color{blue}{$\times$2.78}}} \end{tabular}
 & \begin{tabular}{c} 0.114 \\ {\textbf{\color{blue}{$\times$2.78}}} \end{tabular}
 & \begin{tabular}{c} 7.441 \\ {\textbf{\color{blue}{$\times$2.26}}} \end{tabular}
 & \begin{tabular}{c} 1.039 \\ {\textbf{\color{blue}{$\times$3.13}}} \end{tabular}
 & \begin{tabular}{c} 0.153 \\ {\textbf{\color{blue}{$\times$3.40}}} \end{tabular}
 \\ \hline
\end{tabular}
}
\vspace{-10pt}
\label{tab:comm}
\end{table}

\subsubsection*{\bf AWS CompareFace API}
Before reporting the IR results, we briefly summarize the pricing policy of AWS Recognition API\footnote{\url{https://aws.amazon.com/rekognition/pricing/}}.
For each query or image enrollment, AWS CompareFace charges \$0.001 per API call.
As described in Section~\ref{sec:pipeline}, the adversary needs to query $\frac{k(k-1)}{2} + k \cdot |\mathtt{DB}_{N}|$ pairs of facial images to construct a \net, where $k$ is the dimension of the rotated templates obtained from the Gram decomposition.
AWS CompareFace supports batched score queries up to 4,096 images at the same cost as a single query.
This allows reducing the total query cost to $(k-1) + k \cdot \frac{|\mathtt{DB}_{N}|}{4096}$, in addition to enrolling $k+|\mathtt{DB}_{N}|$ facial images.

Based on the pricing policy above, we select $k=200$ and turn off the NES postprocessing, although NES could improve the attack performance.
Under this setting, the total API cost for obtaining all rotated templates is approximately \$100.
We do not apply NES post-processing due to the budget constraint; enabling this step would like further improve the IR.
Importantly, the cost of \$100 is incurred only once; after the rotated features are obtained, the adversary can construct the \net locally using the MCP solver for arbitrary authentication trial budgets, without further API queries.

We conduct our attack with hyperparameters $\tau_{\mathsf{MCP}}=0.3$ and $\kappa=10$, and the results are provided in Table~\ref{tab:comm}. 
Consistent with the behavior of open-source FRSs, the IRs from our attack consistently exceed those from the baseline, and the gap increases for small authentication trial budgets and strict FMR levels.
We observe that the IR gain from attacking AWS CompareFace is smaller than that observed for open-source FRSs, achieving $1.75\times$--$6.57\times$ across the experimental settings.
We attribute this phenomenon to the approximation error introduced by converting confidence scores to cosine similarities; the exact relationship between them is unknown in public.
Nevertheless, the adversary may further improve the FMR via additional API queries and NES-based postprocessing.

\subsubsection*{\bf Implication on Identification Scenarios}
As discussed in Section~\ref{sec:extensions}, our attack can be naturally extended to identification scenarios.
For the IR value $\rho$ at a verification threshold $\tau$, the FPIR among $N$ enrolled identities can be estimated by $1-(1-\rho)^{N} \approx 1-e^{-\rho N}$.
This relationship implies that the maximum number of identities safely handled by the system is inversely proportional to $\rho$.
Accordingly, the results in Table~\ref{tab:open} and~\ref{tab:comm} can be interpreted as the $\rho$ values at different thresholds and FMR levels, which directly determine the system's vulnerability in identification deployments.

\begin{table}[t]
\centering
\caption{IRs (\%) against FRSs $F_{1}$--$F_{4}$ and $F_{\mathsf{A}}$ from the adversary exploiting MasterFaces crafted from $F_{1}$.}
\vspace{-10pt}
\resizebox{.9\linewidth}{!}{
\begin{tabular}{c|c|ccc|ccc}
\hline
\multirow{3}{*}{$Q$} & \multirow{3}{*}{$F_T$} & \multicolumn{6}{c}{Target Dataset} \\ \cline{3-8}
 &  & \multicolumn{3}{c|}{LFW} & \multicolumn{3}{c}{CASIA-WebFace} \\ \cline{3-8}
 & & $10^{-3}$ & $10^{-4}$ & $10^{-5}$ & $10^{-3}$ & $10^{-4}$ & $10^{-5}$ \\ \hline \hline

\multirow{5}{*}{$5$}
 & $F_1$ & 1.929 & 0.322 & 0.013 & 2.766 & 0.369 & 0.038 \\ \cline{2-8}
 & $F_2$ & 1.042 & 0.090 & 0     & 1.506 & 0.171 & 0.019 \\ \cline{2-8}
 & $F_3$ & 1.351 & 0.180 & 0     & 1.478 & 0.256 & 0.038 \\ \cline{2-8}
 & $F_4$ & 0.875 & 0.090 & 0.026 & 1.288 & 0.180 & 0.019 \\ \cline{2-8}
 & $F_{\mathsf{A}}$ & 0.429 & 0.039 & 0     & 1.097 & 0.189 & 0.009 \\ \hline

\multirow{5}{*}{$10$}
 & $F_1$ & 5.055  & 0.733 & 0.154 & 6.924 & 1.250 & 0.218 \\ \cline{2-8}
 & $F_2$ & 3.344  & 0.347 & 0.013 & 3.628  & 0.673 & 0.161 \\ \cline{2-8}
 & $F_3$ & 4.270  & 0.733 & 0.090 & 5.001  & 1.099 & 0.123 \\ \cline{2-8}
 & $F_4$ & 2.662  & 0.270 & 0.051 & 3.561  & 0.616 & 0.066 \\ \cline{2-8}
 & $F_{\mathsf{A}}$ & 1.247 & 0.091 & 0.026 & 1.750 & 0.265 & 0.047 \\ \hline

\multirow{5}{*}{$30$}
 & $F_1$ & 11.81  & 2.071 & 0.257 & 15.34  & 3.021 & 0.417 \\ \cline{2-8}
 & $F_2$ & 7.897  & 1.183 & 0.206 & 9.945  & 1.639 & 0.284 \\ \cline{2-8}
 & $F_3$ & 9.029  & 1.415 & 0.180 & 11.78 & 1.856 & 0.322 \\ \cline{2-8}
 & $F_4$ & 7.460  & 1.080 & 0.193 & 8.591  & 1.658 & 0.189 \\ \cline{2-8}
 & $F_{\mathsf{A}}$ & 3.685 & 0.468 & 0.065 & 5.032 & 0.568 & 0.057 \\ \hline

\end{tabular}
}
\vspace{-10pt}
\label{tab:white}
\end{table}

\subsection{Ablation Study}

We present ablation studies on the core components of our attack, including the comparison of IRs from MasterFaces generated from a white-box surrogate model and the Gram decomposition dimension according to our analysis in Section~\ref{sec:detail}.

\subsubsection*{\bf Effect of Leveraging API Queries}
To quantify the IR gains from the exposition of API queries, we measure the IRs from the adversary who directly crafts MasterFaces on its surrogate model and uses them for conducting an impersonation attack against the target FRSs.
We set $F_{1}$ as the surrogate model, attacking the remaining FRSs as targets.
In addition, we also report the IR for attacking $F_{1}$ itself, which corresponds to the white-box attack scenario, i.e., the adversary has full knowledge of the target FRS's template extractor.

The results are provided in Table~\ref{tab:white}.
When conducting the attack against $F_{2}$--$F_{4}$ and $F_{\mathsf{A}}$, the MasterFaces from $F_{1}$ show IR values in the middle between those of the baseline and our attack.
On the other hand, the IR values from attacking the same $F_{1}$ serve as the upper bound of the IRs achieved by our attack, as this setting corresponds to the white-box attack.
This indicates that the Gram decomposition technique indeed contributes to crafting API-tailored \net.

\begin{figure}[t]
    \centering
    \includegraphics[width=.95\linewidth]{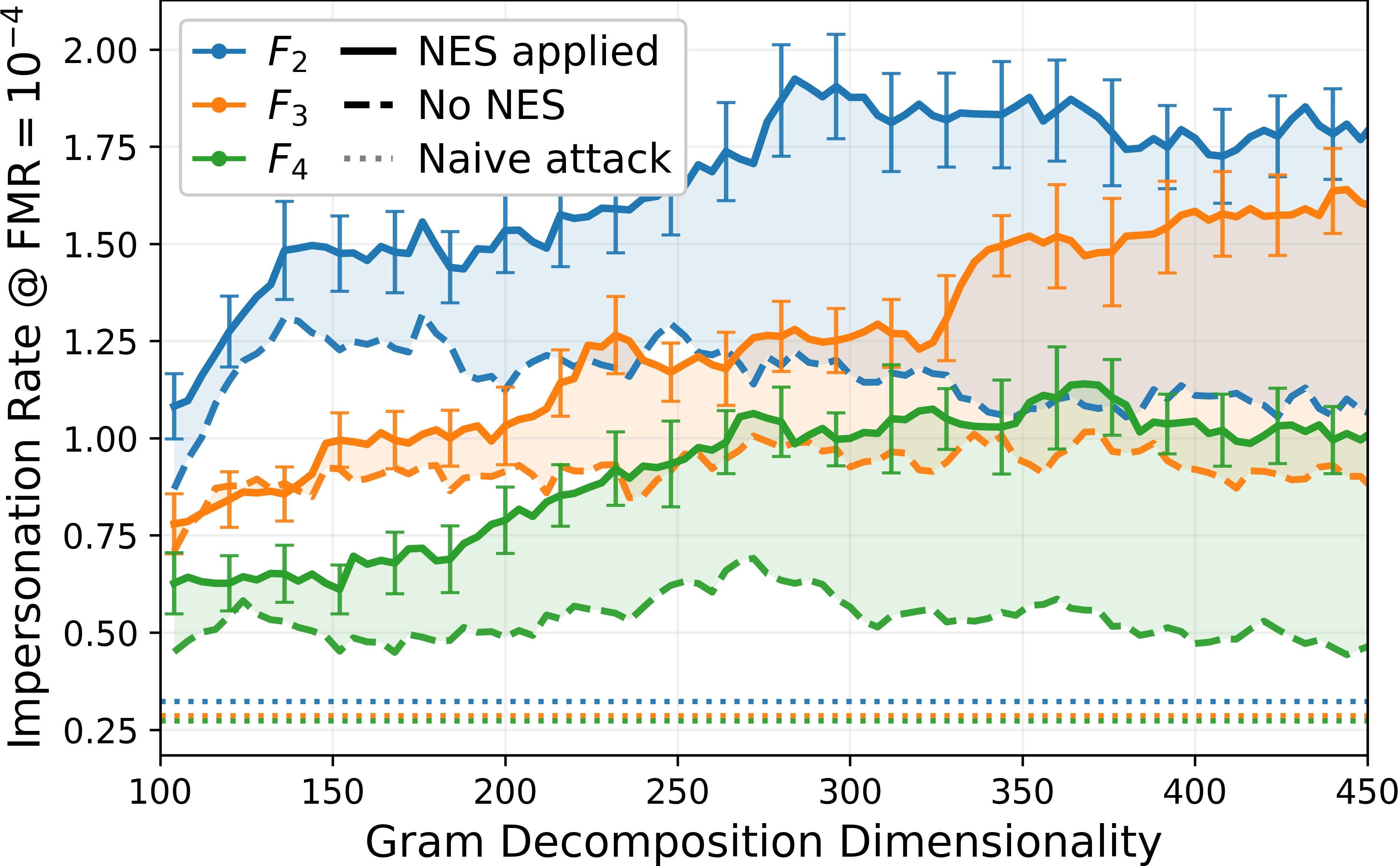}
    \vspace{-10pt}
    \caption{IRs (\%) with various Gram decomposition dimensionality. The colored area depicts the difference in IRs (\%) incurred by the NES postprocessing, and the error bars indicate the standard deviation.}
    \label{fig:NES_1}
    \vspace{-12pt}
\end{figure}

\subsubsection*{\bf Effect of Gram Decomposition Dimensionality}
As discussed in Section~\ref{sec:detail}, the dimensionality for the Gram decomposition determines the solution space of the \net in analogy with PCA-based dimensionality reduction.
Increasing the dimension enlarges the solution space where the \net can lie, thereby resulting in more coverage over the template space by the constructed \net.
To examine this, we sweep the decomposition dimension from 100 to 450 and evaluate the IR values from our attack with the NES postprocessing.
We fixed $\tau_{\mathrm{MCP}}=0.2$ and $\kappa=10$, and set $Q=5$ and the FMR as $10^{-4}$.
Since the NES is disabled when attacking $F_{\mathsf{A}}$, we report the results from $F_{1}$--$F_{4}$ only.
To account for the randomness of the NES, we report the mean IR over 30 independent runs, with error bars indicating the standard deviation.

The results are visualized in Fig.~\ref{fig:NES_1}.
As the dimension grows, we can observe that the IRs from our attack with NES postprocessing improve while those from turning off NES remain largely unchanged.
This behavior arises because Kim et al.'s reconstruction attack recovers faces from templates that lie in some low-dimensional subspace (see Appendix~\ref{sec:SBDetail} for a detailed explanation).
Consequently, although increasing the Gram decomposition dimension broadens the solution space for \net, the reconstructed faces themselves remain confined to the same subspace.
NES postprocessing bridges this mismatch by adapting the reconstructed faces to the expanded solution space.

\subsection{What Do Our MasterFaces Look Like?}

We discuss MasterFaces crafted from our attack, providing their visualizations and the faces they cover.

\subsubsection*{\bf Visualization of MasterFaces}
In our attack, the visual quality of the MasterFaces depends on the score-based reconstruction attack subroutine.
The NbNet used by Kim et al.'s attack~\cite{kim2024scores} reconstructs faces of $128\times 128$ resolution.
Since their attack can be combined with other inverse models, we also provide the visualization results from replacing the NbNet with Arc2Face~\cite{papantoniou2024arc2face} that produces $512\times 512$ resolution face images.
Fig.~\ref{fig:masterface} shows MasterFaces obtained from attacking AWS CompareFace API ($F_{\mathsf{A}}$).

\begin{figure}[t]
    \centering
    \includegraphics[width=.95\linewidth]{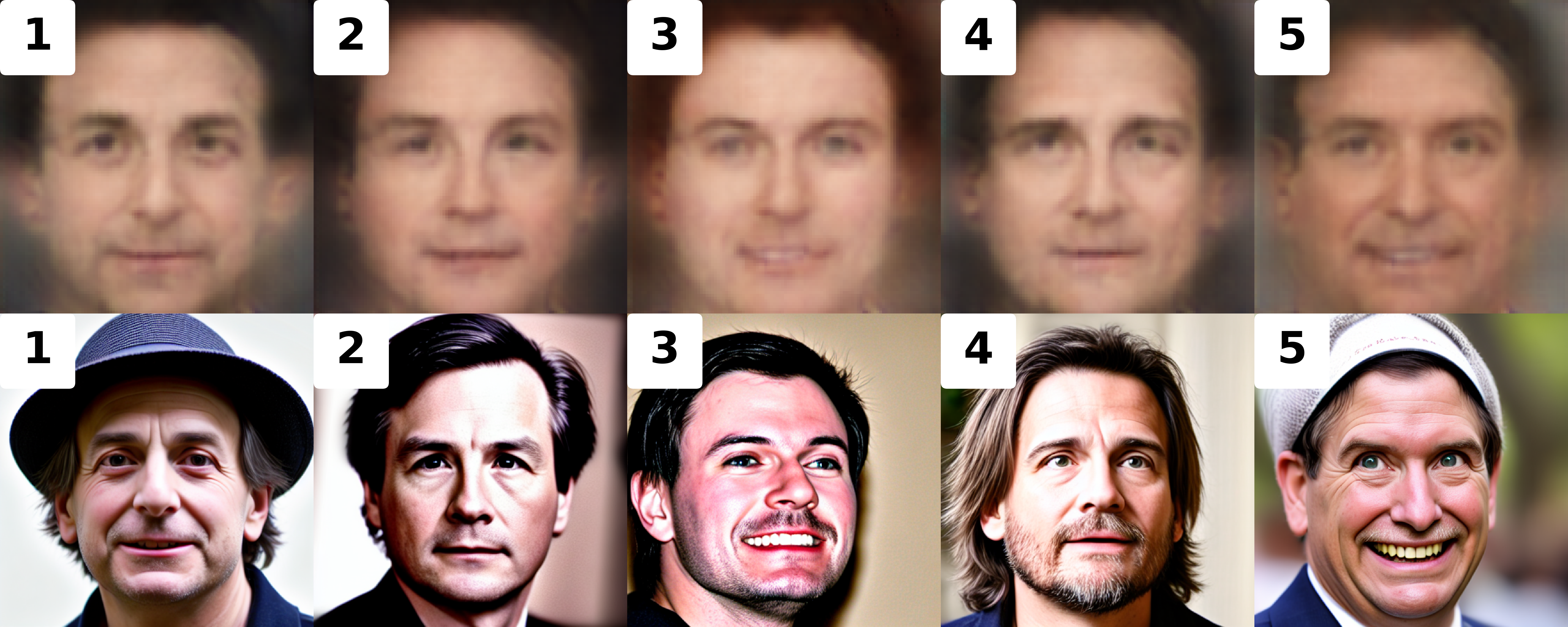}
    \vspace{-10pt}
    \caption{MasterFaces from our attack against $F_\mathsf{A}$ with NbNet (top) and Arc2Face (bottom). We set $Q=5$.}
    \label{fig:masterface}
    \vspace{-10pt}
\end{figure}

\subsubsection*{\bf Which Faces are Covered by MasterFaces?} 
Surprisingly, although the reconstructed MasterFaces in Fig.~\ref{fig:masterface} appear as faces of white males in the pixel space, they successfully cover target faces across diverse demographic groups.
In Fig.~\ref{fig:matched}, we visualize the covered facial images by MasterFaces in Fig.~\ref{fig:masterface} on $F_{\mathsf{A}}$ when the FMR is set to $10^{-4}$.
From this figure, we observe that MasterFaces, despite their white male appearance, cover faces from a wide range of demographic groups, including females and non-white individuals.
This observation is seemingly counterintuitive, but it highlights that the effectiveness of MasterFaces is governed by their coverage over the template space rather than the pixel space.
This behavior is consistent with the \net-based formulation of our attack, which explicitly targets non-uniformity in the template space.

In fact, this behavior is a direct consequence of the intrinsic properties of Kim et al.'s attack.
Recent studies~\cite{leroy2025attributes, kim2025non} have demonstrated that facial images of the same attributes, e.g., race or gender, form a subspace over the template space.
Kim et al.'s attack finds a projection of the target template onto the subspace spanned by the queried faces, which is $\mathtt{DB}_{P}$ in our attack instantiation.
The selected facial images to launch Kim et al.'s attack correspond to the PCA components of WebFace42M~\cite{zhu2021webface260m}, a large-scale web-crawled face dataset.
More generally, web-scraped face datasets have been widely reported to exhibit demographic imbalances, often favoring white and male subjects~\cite{wang2019racial, grother2019face}.
Hence, the recovered faces from Kim et al.'s attack are likely to be the projection of the ground truth \net onto the subspace correlated to the faces of white males.

\begin{figure}[t]
    \centering
    \includegraphics[width=.95\linewidth]{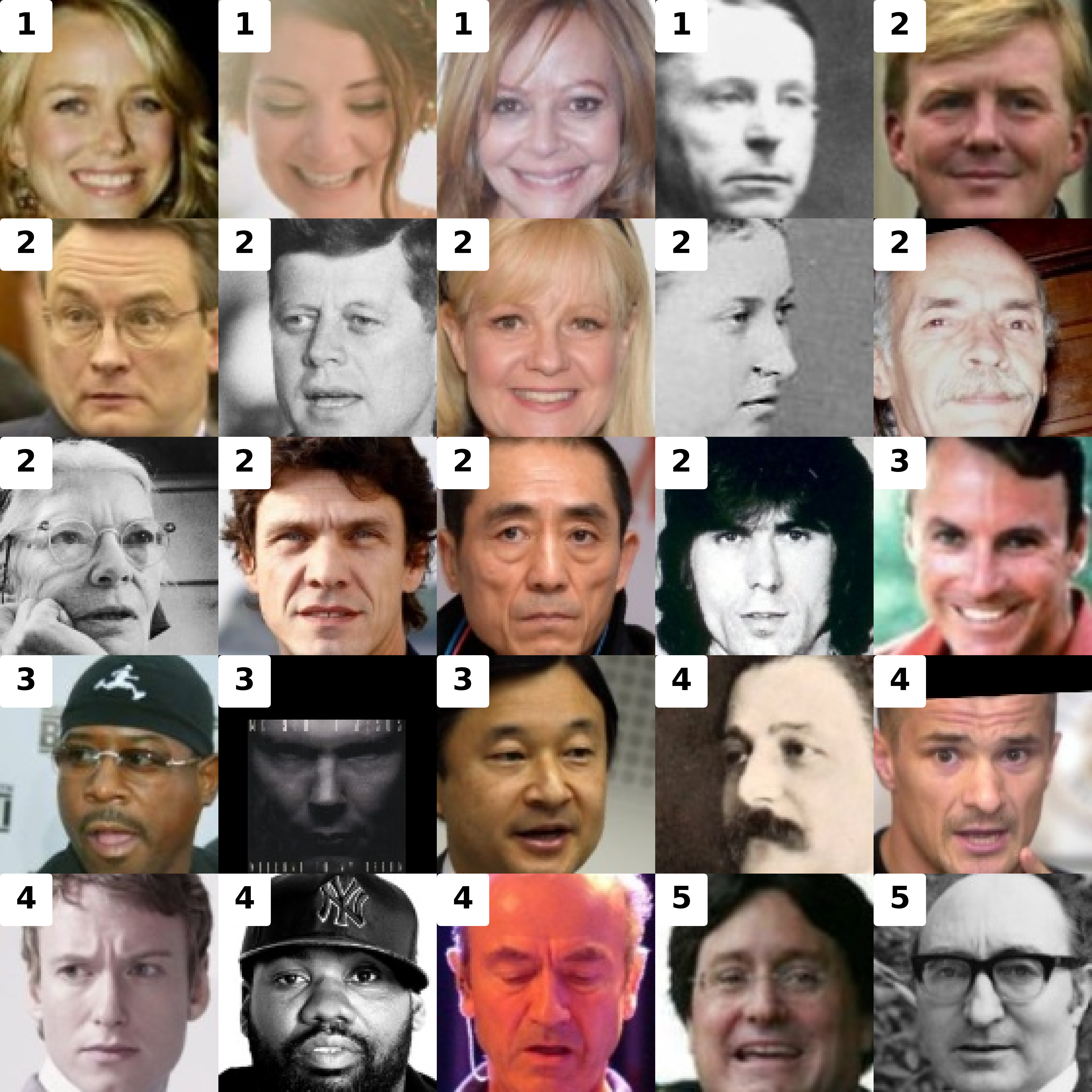}
    \vspace{-10pt}
    \caption{Faces covered by MasterFace against AWS at an FMR of $10^{-4}$ in Fig.~\ref{fig:masterface}, sampled from LFW and CASIA-WebFace. Each number represents the corresponding MasterFace.}
    \label{fig:matched}
    \vspace{-10pt}
\end{figure}

\section{Discussion}

\subsubsection*{\bf Potential Countermeasure}
We first recall that our attack consists of two core ingredients: \net and Gram decomposition.
The former enables finding the maximum coverage of the known template distribution, while the latter provides such a distribution from the API's template extractor through queries.
From this viewpoint, one possible countermeasure to our attack is to interrupt the latter phase, e.g., add random noise to the query response.
This significantly hinders the adversary's capability to construct an accurate \net, which is sensitive to even small distribution shifts.
This effect is further supported by our experimental observations: when attacking commercial APIs, the error derived from converting the confidence score to the cosine similarity makes it challenging to construct a \net.
We expect that differential privacy techniques on hyperspheres, e.g., directional differential privacy mechanism~\cite{weggenmann2021differential}, may provide a principled framework for formalizing and quantifying the effectiveness of such defenses in terms of the resulting quality of the \net.
We leave such an investigation as future work.

\subsubsection*{\bf Physical Realization of Our Attack}
When attacking real-world FRSs, the adversary may need to physically present facial images to the camera, e.g., 2D printing or 3D masks~\cite{sharif2016accessorize, liu2025projattacker, an2023imu}.
Nevertheless, we evaluated the digital MasterFaces produced from the proposed attack, as our main focus was to investigate and exploit the non-uniformity of face templates through API queries.
This is also the reason why we report FMR, which characterizes matcher-level behavior, instead of the false acceptance rate that incorporates system-level failures.
Specifically, the above-mentioned realization strongly depends on how to craft physical faces, which lies in the realm of face spoofing, and thus is orthogonal to our approach.
We leave assessing the vulnerability of real-world systems against our attack with spoofing as an important future direction.

\subsubsection*{\bf Applicability to Other Biometric Modalities}
Along with faces, various biometric traits have been widely used for authentication, e.g., fingerprints, voices, and irises, with various template representations.
Since rotation is still an isometry over the Euclidean space with $\ell_{2}$ distance, our pipeline is still applicable when the distance between templates is calculated over such a space.
Furthermore, our attack can even be extended to the Hamming distance with the following idea:
for $d$-dimensional binary vectors $x,y \in \{0, 1\}^{d}$, let us encode them to vectors $\hat{x}, \hat{y} \in  \{-\frac{1}{\sqrt{d}},\frac{1}{\sqrt{d}}\}^{d}$ by mapping $0\mapsto -\frac{1}{\sqrt{d}}$ and $1 \mapsto \frac{1}{\sqrt{d}}$ for each component. 
Then $\|\widehat{x}\|_{2} = \|\widehat{y}\|_{2}=1$ and $\mathsf{HD}(x,y) = d \cdot \frac{1 - \langle \hat{x}, \hat{y} \rangle}{2}$ holds, where $\mathsf{HD}$ denotes the Hamming distance.
As a result, we can recover the cosine similarity between encoded features from the Hamming distance score queries.
Thus, with appropriate score-based attacks, our attack can be applied to biometric modalities using Euclidean or Hamming distances as a distance metric, including iris recognition based on IrisCode~\cite{daugman2009iris} or deep learning-based speaker recognition~\cite{jung2022pushing}.

\subsubsection*{\bf Connection with Intrinsic Entropy of Biometrics}
Conceptually, FMR is linked to the intrinsic entropy of the biometric traits, which the template extractor aims to capture.
From an information-theoretic perspective, fuzzy extractors~\cite{dodis2004fuzzy} formalize this connection by relating intrinsic entropy to statistical security guarantees, quantifying how many bits of nearly uniform randomness can be extracted from some random source.
In particular, fuzzy min entropy~\cite{fuller2016fuzzy} refines this idea by measuring this entropy as the probability mass of a metric ball, which is closely related to the FMR.
Capturing such non-uniformity yields a non-negligible FMR, as opposed to the fact that spherical caps occupy negligible area over the high-dimensional hypersphere.
Our \net-based attack can be viewed as drawing a multi-query analogy of fuzzy min entropy, in the sense that \net corresponds to the metric balls whose union concentrates a large probability mass under the template distribution.
Notably, our experimental results suggest that such a construction captures non-trivial structure beyond that implied by fuzzy min entropy alone, as evidenced by attack success rates exceeding FMR.
This observation points to a potential connection between \net and existing impossibility results in fuzzy extractors based on fuzzy min entropy~\cite{fuller2016fuzzy, fuller2024impossibility}.
We believe that this viewpoint would deepen our understanding of the intrinsic entropy of biometrics, leaving its formal investigation as an interesting direction for future work.

% \red{SH: Need to be revised}

\section{Related Work}

% \input{figs/Tab1}

% \red{To be revised.}

We survey existing attacks against FRSs in terms of adversarial goals and capabilities, highlighting why their adversarial settings do not align with the target scenarios considered in this work.
% We provide a summary of these attacks in terms of adversarial goal and capability in Table~\ref{tab:comp_table}.

\subsubsection*{\bf Adversarial Attacks}
Adversarial attacks aim to maliciously modify the input to induce the machine learning model to make wrong decisions.
Such a modified input is referred to as an adversarial example.
In particular, adversarial attacks against FRSs have been actively studied, targeting commercial APIs.
Some studies focused on crafting adversarial examples by leveraging the score~\cite{kim2025non} or decision~\cite{dong2019efficient, zheng2021black, chen2020boosting} results of two submitted biometric samples.
Some attacks require direct score queries to the enrolled identity~\cite{chen2021real, du2020sirenattack, sharif2016accessorize}, utilizing zeroth-order optimization algorithms to find adversarial examples.
Another line of work~\cite{li2023sibling, jia2022adv, salar2025enhancing} utilizes surrogate models, relying on the transferability of adversarial examples.

However, these attacks require that the adversary can either (i) choose which biometric sample to be enrolled~\cite{kim2025non, dong2019efficient, li2023sibling, jia2022adv, salar2025enhancing, zheng2021black} or (ii) obtain a similarity score between enrolled and queried samples~\cite{chen2021real, du2020sirenattack}.
Although their adversarial models are well-justified and considered feasible in various scenarios, e.g., generating fraudulent identities for romance scams~\cite{kim2025non}, from the perspective of impersonation attacks, their strong adversarial capabilities are not applicable in practical impersonation scenarios.

\subsubsection*{\bf Reconstruction Attacks}
Reconstruction attacks focus on extracting the biometric characteristics of the enrolled identity through interactions with the target FRS.
Some methods model the attack as an optimization problem where the adversary can directly obtain the scores between the enrolled identity and queried samples in the target FRS.
To address this, several attack methods aim to design corresponding solvers, such as the hill-climbing algorithm~\cite{razzhigaev2021darker, vendrow2021realistic, jung2024face}, or utilize the geometric structure of the template space~\cite{kim2024scores}.
In contrast, some attacks tried to directly train a neural network that reconstructs the face corresponding to the input template, assuming that the adversary can query faces to obtain the corresponding template extracted from the target FRS~\cite{mai2018reconstruction, duong2020vec2face, lu2021voxstructor, shahreza2024vulnerability}.
Afterwards, if the adversary compromises the target FRS's database to obtain the target template, such as via a data breach, the adversary can reconstruct the target identity's face.

In our attack, we can leverage score-based attacks to recover faces corresponding to \net. 
Remark that we utilized the Kim et al.~\cite{kim2024scores} attack as a sub-routine to find an initial point of the subsequent score-based attack that utilizes a latent space search algorithm~\cite{vendrow2021realistic, jung2024face}.
However, similar to adversarial attacks, these attacks per se are not applicable in our threat model as they require powerful adversarial capabilities, e.g., direct score or template queries, or require obtaining the target template stored in the target FRS.

\subsubsection*{\bf Morphing Attack and MasterFaces}
Morphing attacks~\cite{damer2018morgan, colbois2023approximating} and Wolf attacks~\cite{une2007wolf, otsuka2013wolf} aim to construct biometric samples that are accepted as two or more distinct identities.
MasterFaces~\cite{nguyen2020generating, nguyen2022master, shmelkin2021generating, friedlander2022generating, terhorst2022limited} instantiate the Wolf attack paradigm in the face recognition domain.
Such faces effectively act as a backdoor in the target FRS; once enrolled, all identities contributed to crafting such faces can be successfully authenticated.
However, from the perspective of impersonation attacks, these attacks are not directly applicable, as they do not target a victim identity enrolled in the target system.

Several studies have explored the use of MasterFaces for impersonation, from the intuition that they are more plausible to impersonate the target identity than randomly sampled faces.
However, as we summarized in Section~\ref{sec:OurAttack}, these attacks are ineffective against modern FRSs due to their heavy reliance on the template extraction pipeline and suboptimality of their algorithm, and thus they are believed not to pose a serious threat against modern FRSs.

The proposed attack shares the same intuition as MasterFaces, but overcomes their limitations by introducing a new \net-based formulation and Gram decomposition technique that enables the adversary to effectively exploit API queries legitimately.
From this perspective, our method repositions MasterFace impersonation attacks as a potential security threat when the adversary can purchase and utilize pay-as-you-go commercial API services.

\subsubsection*{\bf Other Related Attacks}
Alongside the above classes of attacks, other attacks also reveal various vulnerabilities of the target FRS. 
Some studies attempted to directly launch an impersonation attack through a hill-climbing algorithm via score queries~\cite{maiorana2014hill, galbally2010vulnerability, jeong2022analysis, an2023imu}.
Likewise, score-based attacks in adversarial and reconstruction attacks, their adversarial assumptions do not align with the practical constraints.
Mimicry attacks or replay attacks have also been considered~\cite{khan2020mimicry, yoon2020new, kinnunen2019can}, which submit the recorded biometric signals of the enrolled users, exploiting physical or sensor-level vulnerabilities.

\subsubsection*{\bf Takeaway}
While several attacks against FRSs have been proposed from various approaches and attack surfaces, under the threat model where highly restricted forms of interactions with the target FRS are permitted, most of these attacks become inapplicable or fail to achieve a non-trivial impersonation rate.
This is the reason why the zero-effort impostor remains a standalone baseline; in contrast, our attack demonstrates that the adversary can surpass this barrier by purchasing a publicly disclosed commercial API.

\section{Conclusion}

This work demonstrates that an adversary's legitimate API queries enable a non-trivial impersonation attack beyond the zero-effort impostor baseline, even under a realistic-yet-constrained threat model in which almost all existing attack techniques are inapplicable.
This result is enabled by our two core ingredients: \net-based formulation and Gram decomposition technique, which effectively overcome the limitations of MasterFaces-based attacks, and thus revive them as a potential security threat against real-world FRS deployments.
Specifically, we expect that our Gram decomposition technique will serve as a useful tool for discovering other types of vulnerabilities where the adversary aims to extract non-trivial information about the unknown FRS itself through interactions, such as membership inference or model extraction attacks.
We believe that our findings open new directions for assessing the security risks associated with the public disclosure of commercial APIs.

\section*{Ethical Statement}
We attest that we thoroughly read the ethics discussions in the conference call for papers, the detailed submission instructions, and the ethics guidelines.
We attest that the research team considered the ethics of this research, that the authors believe the research was done ethically, and that the team's next step plans are ethical.
More precisely, we utilized public benchmark datasets for experiments; no private or sensitive data was collected by the authors.
We provide details about the public datasets we used, including (i) how each dataset was collected, (ii) how each dataset was processed for evaluation, and (iii) how to obtain these datasets.
Since only public data were used, IRB approval was not required.
We attest that we discovered and investigated the proposed attack for research only, and do not conduct an actual attack against real-world systems.
To mitigate potential misuse, we will release only the minimal code necessary for reproducing results.

\section*{Open Sciences}
We provide the source code through the Zenodo archive: \url{https://zenodo.org/records/20765343}.
Given the potential societal harm, we only provide the minimal attack pipeline enough to validate the correctness, and the released implementation is not sufficient for directly attacking real-world systems.
We also note that our artifact contains how to obtain real-world datasets and open-source FRSs we used during the experiments, each of which is publicly available.

%%
%% The acknowledgments section is defined using the "acks" environment
%% (and NOT an unnumbered section). This ensures the proper
%% identification of the section in the article metadata, and the
%% consistent spelling of the heading.
\begin{acks}
We thank the anonymous reviewers for their helpful comments and feedback.
This work was supported in part by the Institute of Information and Communications Technology Planning and Evaluation (IITP), grant funded by the
Korea Government (MSIT) (RS-2021-II210727), and the Culture, Sports and Tourism Research and Development Program through Korea Creative Content Agency Grant funded by the Ministry of Culture, Sports and Tourism, under Grant RS-2024-00332210.
\end{acks}

%%
%% The next two lines define the bibliography style to be used, and
%% the bibliography file.
\bibliographystyle{ACM-Reference-Format}
\bibliography{bib_short}

%%
%% If your work has an appendix, this is the place to put it.
\appendix

\section{Detailed Settings and Additional Results}\label{sec:ExpDetail}

\subsubsection*{\bf Dataset Selection Rationale.}

As summarized in Table~\ref{tab:datasets}, we select datasets by jointly considering the number of facial images and the number of identities, depending on their roles in our evaluation.
For constructing the \net, we choose MS1MV3~\cite{DBLP:conf/iccvw/DengGZDLS19}, which is one of the largest publicly available face datasets and contains both a massive number of images and a sufficiently large number of identities. This choice enables the \net to effectively cover a broad region of the template space.
As target datasets, we select CASIA-WebFace~\cite{yi2014learning} and LFW~\cite{huang2008labeled}, which are smaller than MS1MV3 but still include a sufficiently large number of identities to support reliable impersonation evaluation.
In contrast, AgeDB-30~\cite{moschoglou2017agedb} and CFP-FP~\cite{sengupta2016frontal} are commonly used as verification benchmarks in face recognition. However, due to their relatively small numbers of identities, they are not well-suited for our fitting and coverage analysis, as the results could be overly optimistic, pessimistic, or biased. Therefore, we do not include them as primary target datasets in the main experiments. Instead, they are used for additional performance evaluation of the proposed and baseline models, together with LFW, to assess generalization across standard verification benchmarks.

\begin{table}[b]
\centering
\vspace{-5pt}
\caption{Summary of face datasets used for constructing the \net, target enrolled identities, and additional evaluations.}
\vspace{-10pt}
\resizebox{\linewidth}{!}{
\begin{tabular}{c|c|cc|cc}
\hline
 & \textbf{\net} & \multicolumn{2}{c|}{\textbf{Target}} & \multicolumn{2}{c}{\textbf{Additional}} \\
\cline{2-6}
\textbf{Dataset} 
& MS1MV3 
& CASIA-WebFace 
& LFW 
& AgeDB-30 
& CFP-FP \\
\hline
\# Images 
& 5,179,510 
& 490,623
& 13,233 
& 16,488 
& 7,000 \\ \hline
\# Ids 
& 93,431 
& 10,572 
& 5,749 
& 568 
& 500 \\
\hline
\end{tabular}
}
\label{tab:datasets}
% \vspace{-10pt}
\end{table}

\subsubsection*{\bf More Details on Open-sourced Models}
We provide detailed information about open-source FRSs and AWS CompareFace API, including the decision thresholds for each FMR level and the verification benchmark results.
We measure the threshold by following the procedure in Section~\ref{sec:expsetting} using the MS1MV3 dataset.
Specifically, for the AWS CompareFace, we randomly select 8,192 identities and determine the threshold by querying all pairs of the corresponding faces.
Note that AWS CompareFace provides confidence scores ranging from 0 to 100; a higher score means the given pair of images is similar. 
For this reason, we report the decision threshold in terms of those scores.
The thresholds are provided in Tab.~\ref{tab:thx}.

\begin{table}[t]
    \centering
\caption{Thresholds of FRSs for FMR level}
\vspace{-10pt}
    \begin{tabular}{c|c|c|c}
        \hline
        FMR & $10^{-3}$ & $10^{-4}$ & $10^{-5}$ \\ \hline \hline
        $F_1$ & 0.1881 & 0.2413 & 0.2995 \\ \hline
        $F_2$ & 0.1818 & 0.2313 & 0.2857 \\ \hline
        $F_3$ & 0.1852 & 0.2406 & 0.3008 \\ \hline
        $F_4$ & 0.2258 & 0.2856 & 0.3434 \\ \hline
        $F_\mathsf{A}$ & 22.8715 & 45.0380 & 68.7471 \\ \hline
\end{tabular}
\vspace{-5pt}
\label{tab:thx}
\end{table}

\begin{table}[t]
\centering
\caption{Verification performance (TAR and FAR) at operational FMR levels on three benchmarks. 
For each FRS, the decision threshold is set to match the target FMR, and the resulting TAR(\%)/FAR(\%) are reported on each dataset.}
\vspace{-10pt}
\label{tab:tar_far}
\resizebox{0.9\linewidth}{!}{
\begin{tabular}{c|cc|cc|cc}
\hline
\multirow{2}{*}{\textbf{FRS}} 
& \multicolumn{2}{c|}{\textbf{LFW}} 
& \multicolumn{2}{c|}{\textbf{AgeDB-30}} 
& \multicolumn{2}{c}{\textbf{CFP-FP}} \\
\cline{2-7}
& \textbf{TAR} & \textbf{FAR} 
& \textbf{TAR} & \textbf{FAR} 
& \textbf{TAR} & \textbf{FAR} \\
\hline\hline
\multicolumn{7}{c}{\textbf{FMR = $10^{-3}$}} \\ \hline
$F_1$ & 99.7 & 0.167 & 97.267 & 0.767 & 98.771 & 0.171 \\
$F_2$ & 99.7 & 0.2 & 96.933 & 0.367 & 97.429 & 0.2 \\
$F_3$ & 99.733 & 0.133 & 96.933 & 0.8 & 98.714 & 0.171 \\
$F_4$ & 99.7 & 0.133 & 96.833 & 0.833 & 98.743 & 0.143 \\
$F_{\mathsf{A}}$ & 99.7 & 0.233 & 97.1 & 0.367 & 99.171 & 0.086 \\
\hline
\multicolumn{7}{c}{\textbf{FMR = $10^{-4}$}} \\ \hline
$F_1$ & 99.7 & 0.033 & 96.0 & 0.033 & 98.457 & 0 \\
$F_2$ & 99.7 & 0.033 & 95.867 & 0.1 & 95.343 & 0.057 \\
$F_3$ & 99.667 & 0.033 & 94.767 & 0.033 & 97.857 & 0 \\
$F_4$ & 99.667 & 0 & 95.533 & 0.167 & 98.2 & 0 \\
$F_{\mathsf{A}}$ & 99.7 & 0.033 & 95.6 & 0 & 98.4 & 0.029 \\
\hline
\multicolumn{7}{c}{\textbf{FMR = $10^{-5}$}} \\ \hline
$F_1$ & 99.633 & 0 & 93.967 & 0 & 97.514 & 0 \\
$F_2$ & 99.6 & 0 & 93.667 & 0 & 91.657 & 0 \\
$F_3$ & 99.467 & 0 & 90.833 & 0 & 95.8 & 0 \\
$F_4$ & 99.667 & 0 & 92.933 & 0 & 97.0 & 0 \\
$F_{\mathsf{A}}$ & 99.667 & 0.033 & 93.667 & 0 & 96.857 & 0 \\
\hline
\end{tabular}
}
\vspace{-10pt}
\end{table}

In addition to this, we also provide the verification benchmark results, including LFW, AgeDB-30, and CFP-FP, for the threshold determined above.
More precisely, we regard the true acceptance rate (TAR) and false acceptance rate (FAR)\footnote{Note that these quantities originally consider system-level failures, e.g., rejections from anti-spoofing methods, according to the standards~\cite{grother2013biometric}. Nevertheless, we will use these terms in the same meaning of true match rate and false match rate, following the convention based on the LFW's standard evaluation protocol~\cite{huang2008labeled}.} as the evaluation metrics and measure them at the threshold per FMR level.
The results are provided in Tab.~\ref{tab:tar_far}.
We can observe that, because of the distributional discrepancy between MS1MV3 and each benchmark dataset, the FAR at each benchmark is slightly different from the FMR value corresponding to each threshold.
Nevertheless, for tighter thresholds corresponding to FMR$=10^{-4}$ or $10^{-5}$, we obtain almost the same result as expected.

\subsubsection*{\bf Additional Experimental Results}
We provide additional experimental results on measuring IRs for open-source FRSs $F_{3}$ and $F_{4}$, which were omitted in the main text due to space constraints.
The results are provided in Tab.~\ref{tab:open_f3} and Tab.~\ref{tab:open_f4}, respectively.
Consistent with the results in $F_{2}$ and $F_{\mathsf{A}}$, the obtained IR values from our attack consistently exceed those from both the zero-effort imposter baseline and those estimated from FMR.
Particularly, as we briefly mentioned in the main text, they exhibit a similar behavior as those measured from $F_{2}$ in Tab.~\ref{tab:open}, i.e., our attack becomes (relatively) more effective than the baseline for smaller authentication trial budgets and stricter FMR levels.

\section{Details on Score-based Attacks}\label{sec:SBDetail}

We provide details of the score-based attacks we used in our experiments, which were omitted in the main text.

\subsubsection*{\bf Kim et al.'s Attack~\cite{kim2024scores}}
The core idea of their attack stems from the observation that well-trained FRSs act like almost isometries, i.e., the distance relationship between templates from facial images is largely preserved across the template extractors.
This motivates the adversary to view the score between the queried faces and the enrolled one in the system as a proxy for the location of the corresponding template over the target model's template space.
Specifically, thanks to the observation mentioned above, the adversary can estimate the location of the target face's template over the template space of its local surrogate model.

More formally, for the target face $\mathfrak{b}$ and the queried ones $\mathfrak{v}_{1}, \dots, \mathfrak{v}_{Q}$, let us denote $F_{T}$ and $F_{L}$ as the template extractors of the target FRS and the adversary's surrogate model, respectively.
Then their observation tells us that 
\begin{align}
    \langle F_{T}(\mathfrak{v_{i}}), F_{T}(\mathfrak{b}) \rangle \approx \langle F_{L}(\mathfrak{v}_{i}), F_{L}(\mathfrak{b}) \rangle \text{ for } i=1,\dots, Q \nonumber.
\end{align}
That is, if we denote $A$ as a matrix whose $i^{\text{th}}$ column vector is $F_{L}(\mathfrak{v}_{i})$, we obtain that
\begin{align}
    A^{T}F_{L}(\mathfrak{b}) \approx ( \langle F_{T}(\mathfrak{v_{i}}), F_{T}(\mathfrak{b}) \rangle )_{i=1}^{Q}. \nonumber
\end{align}

\begin{table}[t]
\centering
\caption{IR (\%) results on $F_3$ at various FMR levels.  
For \textbf{Ours}, the second row in each cell, highlighted as \textbf{\color{blue}bold}, indicates the improvement factor over the baseline.}
\vspace{-10pt}
\resizebox{1.0\linewidth}{!}{
\begin{tabular}{c|c|ccc|ccc}
\hline
\multirow{3}{*}{$Q$} & \multirow{3}{*}{Method} & \multicolumn{6}{c}{Target Dataset} \\ \cline{3-8} 
 &  & \multicolumn{3}{c|}{LFW} & \multicolumn{3}{c}{CASIA-WebFace} \\ \cline{3-8} 
 & & $10^{-3}$ & $10^{-4}$& \multicolumn{1}{c|}{$10^{-5}$} & $10^{-3}$ & $10^{-4}$& $10^{-5}$ \\ \hline \hline
\multirow{4}{*}{$5$}
 & $Q \cdot \mathsf{FMR}$ 
 & 0.5 & 0.05 & 0.005 & 0.5 & 0.05 & 0.005 \\ \cline{2-8}

 & Baseline 
 & 0.495 & 0.048 & 0.007 & 0.495 & 0.048 & 0.007 \\ \cline{2-8}

 & Ours
 & \begin{tabular}{c} 2.901 \\ \textbf{\color{blue}$\times$5.86} \end{tabular}
 & \begin{tabular}{c} 0.388 \\ \textbf{\color{blue}$\times$8.08} \end{tabular}
 & \begin{tabular}{c} 0.035 \\ \textbf{\color{blue}$\times$5.00} \end{tabular}
 & \begin{tabular}{c} 2.803 \\ \textbf{\color{blue}$\times$5.66} \end{tabular}
 & \begin{tabular}{c} 0.408 \\ \textbf{\color{blue}$\times$5.66} \end{tabular}
 & \begin{tabular}{c} 0.019 \\ \textbf{\color{blue}$\times$2.71} \end{tabular}
 \\ \hline

\multirow{4}{*}{$10$}
 & $Q \cdot \mathsf{FMR}$ 
 & 1 & 0.1 & 0.01 & 1 & 0.1 & 0.01 \\ \cline{2-8}

 & Baseline 
 & 0.986 & 0.095 & 0.013 & 0.988 & 0.096 & 0.014 \\ \cline{2-8}

 & Ours
 & \begin{tabular}{c} 4.730 \\ \textbf{\color{blue}$\times$4.80} \end{tabular}
 & \begin{tabular}{c} 0.646 \\ \textbf{\color{blue}$\times$6.80} \end{tabular}
 & \begin{tabular}{c} 0.043 \\ \textbf{\color{blue}$\times$3.31} \end{tabular}
 & \begin{tabular}{c} 5.091 \\ \textbf{\color{blue}$\times$5.15} \end{tabular}
 & \begin{tabular}{c} 0.794 \\ \textbf{\color{blue}$\times$8.27} \end{tabular}
 & \begin{tabular}{c} 0.070 \\ \textbf{\color{blue}$\times$5.00} \end{tabular}
 \\ \hline

\multirow{4}{*}{$30$}
 & $Q \cdot \mathsf{FMR}$ 
 & 3 & 0.3 & 0.03 & 3 & 0.3 & 0.03 \\ \cline{2-8}

 & Baseline 
 & 2.928 & 0.286 & 0.039 & 2.933 & 0.288 & 0.040 \\ \cline{2-8}

 & Ours
 & \begin{tabular}{c} 10.74 \\ \textbf{\color{blue}$\times$3.67} \end{tabular}
 & \begin{tabular}{c} 1.510 \\ \textbf{\color{blue}$\times$5.28} \end{tabular}
 & \begin{tabular}{c} 0.157 \\ \textbf{\color{blue}$\times$4.03} \end{tabular}
 & \begin{tabular}{c} 12.29 \\ \textbf{\color{blue}$\times$4.19} \end{tabular}
 & \begin{tabular}{c} 2.225 \\ \textbf{\color{blue}$\times$7.73} \end{tabular}
 & \begin{tabular}{c} 0.309 \\ \textbf{\color{blue}$\times$7.73} \end{tabular}
 \\ \hline
\end{tabular}
}
\vspace{-10pt}
\label{tab:open_f3}
\end{table}

\begin{table}[t]
\centering
\caption{IR (\%) results on $F_4$ at various FMR levels.  
For \textbf{Ours}, the second row in each cell, highlighted as \textbf{\color{blue}bold}, indicates the improvement factor over the baseline.}
\vspace{-10pt}
\resizebox{1.0\linewidth}{!}{
\begin{tabular}{c|c|ccc|ccc}
\hline
\multirow{3}{*}{$Q$} & \multirow{3}{*}{Method} & \multicolumn{6}{c}{Target Dataset} \\ \cline{3-8} 
 &  & \multicolumn{3}{c|}{LFW} & \multicolumn{3}{c}{CASIA-WebFace} \\ \cline{3-8} 
 & & $10^{-3}$ & $10^{-4}$& \multicolumn{1}{c|}{$10^{-5}$} & $10^{-3}$ & $10^{-4}$& $10^{-5}$ \\ \hline \hline
\multirow{4}{*}{$5$}
 & $Q \cdot \mathsf{FMR}$ 
 & 0.5 & 0.05 & 0.005 & 0.5 & 0.05 & 0.005 \\ \cline{2-8}

 & Baseline 
 & 0.468 & 0.046 & 0.007 & 0.526 & 0.058 & 0.010 \\ \cline{2-8}

 & Ours
 & \begin{tabular}{c} 1.428 \\ \textbf{\color{blue}$\times$3.05} \end{tabular}
 & \begin{tabular}{c} 0.175 \\ \textbf{\color{blue}$\times$3.80} \end{tabular}
 & \begin{tabular}{c} 0.014 \\ \textbf{\color{blue}$\times$2.00} \end{tabular}
 & \begin{tabular}{c} 2.086 \\ \textbf{\color{blue}$\times$3.97} \end{tabular}
 & \begin{tabular}{c} 0.373 \\ \textbf{\color{blue}$\times$6.43} \end{tabular}
 & \begin{tabular}{c} 0.051 \\ \textbf{\color{blue}$\times$5.10} \end{tabular}
 \\ \hline

\multirow{4}{*}{$10$}
 & $Q \cdot \mathsf{FMR}$ 
 & 1 & 0.1 & 0.01 & 1 & 0.1 & 0.01 \\ \cline{2-8}

 & Baseline 
 & 0.934 & 0.091 & 0.013 & 1.049 & 0.116 & 0.022 \\ \cline{2-8}

 & Ours
 & \begin{tabular}{c} 2.843 \\ \textbf{\color{blue}$\times$3.04} \end{tabular}
 & \begin{tabular}{c} 0.324 \\ \textbf{\color{blue}$\times$3.56} \end{tabular}
 & \begin{tabular}{c} 0.024 \\ \textbf{\color{blue}$\times$1.85} \end{tabular}
 & \begin{tabular}{c} 3.615 \\ \textbf{\color{blue}$\times$3.45} \end{tabular}
 & \begin{tabular}{c} 0.686 \\ \textbf{\color{blue}$\times$5.91} \end{tabular}
 & \begin{tabular}{c} 0.100 \\ \textbf{\color{blue}$\times$4.55} \end{tabular}
 \\ \hline

\multirow{4}{*}{$30$}
 & $Q \cdot \mathsf{FMR}$ 
 & 3 & 0.3 & 0.03 & 3 & 0.3 & 0.03 \\ \cline{2-8}

 & Baseline 
 & 2.773 & 0.273 & 0.040 & 3.111 & 0.348 & 0.064 \\ \cline{2-8}

 & Ours
 & \begin{tabular}{c} 7.242 \\ \textbf{\color{blue}$\times$2.61} \end{tabular}
 & \begin{tabular}{c} 0.897 \\ \textbf{\color{blue}$\times$3.29} \end{tabular}
 & \begin{tabular}{c} 0.065 \\ \textbf{\color{blue}$\times$1.63} \end{tabular}
 & \begin{tabular}{c} 9.130 \\ \textbf{\color{blue}$\times$2.94} \end{tabular}
 & \begin{tabular}{c} 1.619 \\ \textbf{\color{blue}$\times$4.65} \end{tabular}
 & \begin{tabular}{c} 0.202 \\ \textbf{\color{blue}$\times$3.16} \end{tabular}
 \\ \hline
\end{tabular}
}
% \vspace{-10pt}
\label{tab:open_f4}
\end{table}

Although the above linear system is underdetermined, utilizing the least-squares solver, i.e., pseudoinverse, gives a sufficiently close solution to $F_{L}(\mathfrak{b})$.
In particular, to reduce the effect of the error in the score, they make the templates $\{F_{L}(\mathfrak{v}_{i})\}_{i=1}^{Q}$ of the queried faces as close to orthogonal as possible.
They call such a set of faces an orthogonal face set (OFS), along with providing an algorithm to find an OFS.
Finally, the candidate face template from least squares is fed to the inverse model $F^{\dagger}_{L}$ of $F_{L}$ held by the adversary, which reconstructs the corresponding faces to the given template.
We provide the detailed algorithm of this attack in Algorithm~\ref{alg:KimAttack}.

We remark that their attack is non-adaptive, in the sense that the queried faces (OFS) are not updated during the query.
Particularly, the choice of the OFS depends on the adversary's surrogate model only.
For this reason, when employing Kim et al.'s attack in our attack framework, we can directly select $\mathtt{DB}_{p}$ as such an OFS, and utilize their rotated templates to recover the faces corresponding to the rotated \net.

\begin{algorithm}[t]
\caption{Kim et al.'s Reconstruction Attack}\label{alg:KimAttack}
\begin{algorithmic}[1]
\REQUIRE Score oracle $\mathcal{O}(\cdot)$, OFS $\{\mathfrak{v}_{1}, \dots, v_{Q}\}$, local surrogate template extractor $F_{L}$ and its inverse model $F_{L}^{\dagger}$.
\ENSURE A facial image $\mathfrak{b} \in \mathcal{B}$
\STATE Compute a matrix $A$ whose $i^{\text{th}}$ column vector is $F_{L}(\mathfrak{v}_{i})$.
\STATE Obtain a score vector $\vec{s}=(\mathcal{O}(\mathfrak{v}_{i}))_{i=1}^{Q}$ from oracle queries.
\STATE Compute $z\gets A^{\dagger}\vec{s}$, where $A^{\dagger}$ is the pseudoinverse of $A$.
\RETURN $\mathfrak{b} \gets F^{\dagger}_{L}(\frac{z}{\|z\|_{2}})$.
\end{algorithmic}
\end{algorithm}

\subsubsection*{\bf NES-based Post-Processing}
After recovering faces from Kim et al.'s attack, we apply a post-processing optimization algorithm based on natural evolution strategy (NES)~\cite{wierstra2014natural} to recover more accurate MasterFaces.
Since the initial point found by Kim et al.'s attack aligns with the latent space of their inverse model, we use the latent search variant of the NES.
In particular, we utilize a hypersphere variant of the NES during gradient estimation as the latent space of the inverse model is also defined as $\mathbb{S}^{d-1}$.

For the given initial point $z_{0} \in \mathbb{S}^{d-1}$, the algorithm is processed as follows: first, the adversary samples random vectors, say $u_{1}, \dots, u_{n_{it}}$, from the uniform distribution over the hypersphere $\mathcal{U}(\mathbb{S}^{d-1})$, where $n_{it}$ denotes the number of random samples for the gradient estimation.
The adversary then project each $u_{i}$ to its tangent space by computing $u_{i} \gets u_{i} - \langle u_{i}, z_{0} \rangle z_{0}$ and $\tilde{u_{i}} \gets \frac{u_{i}}{\|u_{i}\|_{2}}$
For the step size $\theta$, the gradient estimation is processed by perturbing $z_{0}$ from the direction of $u_{i}$ by $\theta$ following the geodesic line over $\mathbb{S}^{d-1}$.
More precisely, the adversary computes
\begin{align}
    z^{+}_{i} = \cos\theta \cdot z_{0} + \sin\theta \cdot \tilde{u}_{i}, \quad 
    z^{-}_{i} = \cos\theta \cdot z_{0} - \sin\theta \cdot \tilde{u}_{i} \nonumber.
\end{align}
Here, computing both $z^{+}_{i}$ and $z^{-}_{i}$ corresponds to the standard antithetic sampling, which reduces the variance of the estimated gradient.
Finally, the adversary obtains the estimated gradient through $2 \cdot n_{it}$ calls of the score oracle $\mathcal{O}(\cdot )$, namely,
\begin{align}
    s_{i}^{+} \gets \mathcal{O}(F^{\dagger}(z_{i}^{+})), \quad  s_{i}^{-} \gets \mathcal{O}(F^{\dagger}(z_{i}^{-})), \quad g\gets \frac{1}{n_{it}}\sum_{i=1}^{n_{it}}\frac{s_{i}^{+} - s_{i}^{-}}{2 \sin \theta}\tilde{u}_{i} \nonumber.
\end{align}
The final $g$ is used for applying the gradient descent on $z_{0}$, namely, for the learning rate $\eta$, $z_{0}$ is updated to $z_{0} + \eta \cdot g$, i.e., the direction that maximizes the confidence score.
Note that the adversary can apply a momentum-based approach to the gradient to stabilize the optimization trajectory.
We provide the algorithm for the single-step update explained above in Algorithm~\ref{alg:HNES}.

\begin{algorithm}[t]
\caption{Hypersphere NES}\label{alg:HNES}
\begin{algorithmic}[1]
\REQUIRE Score oracle $\mathcal{O}(\cdot)$, initial point $z_{0} \in \mathbb{S}^{d-1}$, \# of samples $n_{it}$ for gradient estimation, learning rate $\eta$, search step $\theta$, the inverse model $F^{\dagger}$
\ENSURE Updated point $z \in \mathbb{S}^{d-1}$.
\FOR{$i$ from 1 to $n_{it}$}
\STATE Sample $u_{i} \gets \mathcal{U}(\mathbb{S}^{d-1})$
\STATE Compute $u_{i} \gets u_{i}- \langle{u_{i},z_{0}} \rangle z_{0}$ and $\tilde{u_{i}} \gets \frac{u_{i}}{\|u_{i}\|_{2}}$
\STATE Compute $z_{i}^{+} \gets  z_{0}\cos \theta  + \tilde{u_{i}}\sin\theta$, $z_{i}^{-} \gets z_{0}\cos \theta - \tilde{u_{i}}\sin\theta$.
\STATE Obtain $s_{i}^{+} \gets \mathcal{O}(F^{\dagger}(z_{i}^{+}))$ and $s_{i}^{-} \gets \mathcal{O}(F^{\dagger}(z_{i}^{-}))$ from queries.
\ENDFOR
\STATE Compute $g \gets \frac{1}{n_{it}}\sum_{i=1}^{n_{it}} \frac{s^{+}_{i} - s^{-}_{i}}{2\sin\theta} u_{i}$.
\STATE Compute $z \gets z_{0} + \eta \cdot g$ and \textbf{return} $z \gets \frac{z}{\|z\|_{2}}$
\end{algorithmic}
\end{algorithm}

\subsubsection*{\bf Full Attack Parameters}
We provide the full parameters for the score-based attacks used in our experiments.
\begin{itemize}[leftmargin=*]
    \item \textbf{Kim et al.'s Attack}
    \begin{itemize}
        \item By following the subsequent work~\cite{kim2025non}, we construct the OFS as the faces corresponding to the top 100 principal components of the templates in the WebFace42M dataset~\cite{zhu2021webface260m}. We use the first image for each identity to run the PCA.
        \item We use the same inverse model, modified NbNet~\cite{kim2024scores}.
    \end{itemize}

    \item \textbf{Hyperspherical NES}
    \begin{itemize}
        \item We use $n_{it} = 1$ as it provides stable convergence when combined with momentum of ratio $0.1$.
        
        \item We dynamically update $\theta$ as follows: for $\theta_{0}=0.05$, we set $\theta = \frac{\theta_{0}}{1 + \|g\|_{2}}$ and clamp it between $0.01$ and $0.15$.

        \item The total number of updates depends on the query budget; we set $k \cdot 200$ as the maximum budget.
    \end{itemize}    
\end{itemize}

\section{Omitted Proofs}\label{sec:Proofs}

We provide the omitted proofs of our arguments, especially for the dimensionality reduction on the Gram decomposition in Section~\ref{sec:detail}.
We first present the following lemma, a toolbox to analyze the transformed templates obtained from the Gram decomposition.

\begin{lemma}\label{lemma:First}
Let $A \in \mathbb{R}^{d \times k}$ be a matrix whose singular value decomposition is $U\Sigma V^{T}$ for $U \in \mathbb{R}^{d \times k}$ and $\Sigma, V \in \mathbb{R}^{k \times k}$, and $L \in \mathbb{R}^{k \times k}$ be a Gram decomposition of $A^{T}A$, i.e., $L^{T}L=A^{T}A$.
Then the following properties hold:
\begin{itemize}[leftmargin=*]
    \item There exists an unitary matrix $R \in \mathbb{R}^{k\times k}$ such that $L = RU^{T}A$.

    \item For $z \in \mathbb{R}^{d}$, $RU^{T}z = (z^{T}A \cdot L^{-1})^{T}$. In particular, $z^{T}A = (RU^{T}z)^{T}L$.

    \item For $x, y\in \mathbb{R}^{d}$, $\langle RU^{T}x, \frac{RU^{T}y}{\|RU^{T}y\|_{2}} \rangle = \langle x, \frac{AA^{\dagger}y}{\|AA^{\dagger}y\|_{2}} \rangle$,
\end{itemize}
where $A^{\dagger}$ is the pseudoinverse of $A$. 
\end{lemma}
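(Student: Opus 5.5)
We work throughout under the implicit assumption that makes $L^{-1}$ meaningful: $A$ has full column rank $k$. Then $\Sigma$ is invertible and $L$ is invertible, and the thin SVD $A=U\Sigma V^{T}$ has $U^{T}U=I_{k}$, with $V$ orthogonal.

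\emph{First property.} Since $U^{T}U=I_k$, we have $U^{T}A=\Sigma V^{T}$. Hence
\[
(U^{T}A)^{T}(U^{T}A)=V\Sigma^{2}V^{T}=A^{T}A=L^{T}L .
\]
So $M:=U^{T}A\in\mathbb{R}^{k\times k}$ and $L$ are two Gram factors of the same positive definite matrix. Set $R:=LM^{-1}$, which is legitimate because $M=\Sigma V^{T}$ is invertible. Then
\[
R^{T}R=M^{-T}L^{T}LM^{-1}=M^{-T}M^{T}MM^{-1}=I,
\]
so $R$ is orthogonal and $L=RU^{T}A$. This is exactly the ``well-known'' fact quoted in Section~\ref{sec:APIquery}, and the argument proves it directly.

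\emph{Second property.} Write $z^{T}A$ as $z^{T}(UU^{T})A+z^{T}(I-UU^{T})A$. Because the columns of $A$ lie in $\mathrm{col}(U)$, we have $(I-UU^{T})A=0$, so
\[
z^{T}A=z^{T}UU^{T}A=(U^{T}z)^{T}R^{T}RU^{T}A=(RU^{T}z)^{T}L .
\]
Right-multiplying by $L^{-1}$ and transposing gives $RU^{T}z=(z^{T}AL^{-1})^{T}$.

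\emph{Third property.} Here we use that $AA^{\dagger}$ is the orthogonal projector onto $\mathrm{col}(A)=\mathrm{col}(U)$, so $AA^{\dagger}=UU^{T}$; one can also check this directly from $A^{\dagger}=V\Sigma^{-1}U^{T}$. Since $R$ is orthogonal and $U$ has orthonormal columns,
\[
\|RU^{T}y\|_{2}=\|U^{T}y\|_{2}=\|UU^{T}y\|_{2}=\|AA^{\dagger}y\|_{2},
\]
and
\[
\langle RU^{T}x,RU^{T}y\rangle=x^{T}UU^{T}y=\langle x,AA^{\dagger}y\rangle .
\]
Dividing the second identity by the common norm gives the claim, provided $AA^{\dagger}y\neq 0$.

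The main obstacle is bookkeeping of hypotheses rather than any hard step. The statement needs $A$ of full column rank, so that $L^{-1}$ exists, and needs $AA^{\dagger}y\ne 0$ in the third item. We would state both assumptions explicitly at the start, consistent with the paper's eigenvalue clamping in Section~\ref{sec:detail}. With these in place, each item is a short computation.
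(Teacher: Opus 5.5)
Your proof is correct and follows the paper's argument step for step. Like the paper, you use $U^{T}A=\Sigma V^{T}$ as a second Gram factor of $A^{T}A$, then $UU^{T}A=A$ for the second item and $AA^{\dagger}=UU^{T}$ for the third; the only difference is that you construct $R=L(U^{T}A)^{-1}$ explicitly instead of citing the unitary ambiguity of Gram decompositions, which is legitimate because the full-column-rank assumption you make explicit is already implicitly required by the paper (for $L^{-1}$ and $\Sigma^{-1}$).
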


\begin{proof}
Observe that $A^{T}A = V\Sigma^{T} U^{T}U \Sigma V^{T}$.
Since $U^{T}U = I_{k}$ and $\Sigma^{T} = \Sigma$, we can deduce that $L^{T}L=A^{T}A= (\Sigma V)^{T}(\Sigma V)$ for $\Sigma V^{T} \in \mathbb{R}^{k \times k}$
That is, the unitary ambiguity of Gram decomposition ensures the existence of the unitary matrix $R \in \mathbb{R}^{k \times k}$ such that $L = R\Sigma V^{T}$. Finally, since $U^{T}A = (U^{T}U)\Sigma V^{T} = \Sigma V^{T}$, we finally obtain $L = RU^{T}A$, thereby showing the first property.

For the second property, we first note that $UU^{T}$ corresponds to the projection matrix onto the $A$'s column space.
That is, $UU^{T}A = A$ holds.
From this observation, we can deduce that
\begin{align}
    z^{T}A= z^{T}UU^{T}A = z^{T}UR^{T}RU^{T}A = (RU^{T}z)^{T}(RU^{T}A) \nonumber.
\end{align}
Hence, provided that $L=RU^{T}A$ is invertible, we finally obtain
\begin{align}
    RU^{T}z = \Big( (z^{T}A) \cdot (RU^{T}A)^{-1} \Big)^{T} = (z^{T}A  \cdot L^{-1})^{T} \nonumber.
\end{align}
This proves the second property.

To show the third property, we first observe that the following equality holds
\begin{align}
    \langle RU^{T}x, RU^{T}y \rangle =  \langle U^{T}x, U^{T}y \rangle = \langle x, UU^{T}y \rangle \nonumber,
\end{align}
because $R$ is an isometry over $\mathbb{R}^{k}$.
Here, $AA^{\dagger} = U\Sigma V^{T}V \Sigma^{-1} U^{T} = AA^{\dagger}$ further gives $\langle RU^{T}x, RU^{T}y \rangle = \langle x, AA^{\dagger}y \rangle$.

By following a similar argument as above, we can show that $\|RU^{T}y\|_{2} = \|UU^{T}y \|_{2} = \|AA^{\dagger}y\|_{2}$, namely,
\begin{align}
    \|RU^{T}y\|_{2}^{2} = \langle RU^{T}y, RU^{T}y \rangle = \langle U^{T}y, U^{T}y \rangle = \langle UU^{T}y, UU^{T}y \rangle \nonumber,
\end{align}
where the last equality holds because $U^{T}U=I_{k}$. Therefore, we finally obtain the claimed equality
\begin{align}
    \langle RU^{T}x, \frac{RU^{T}y}{\|RU^{T}y\|_{2}} \rangle = \langle x, \frac{AA^{\dagger}y}{\|AA^{\dagger}y\|_{2}} \rangle \nonumber
\end{align}
which completes the proof.
\end{proof}

By using the result of Lemma~\ref{lemma:First}, we prove the following statement, which tells us the relationship between the effect of the dimensionality reduction in the Gram decomposition process and the found \net.

\begin{proposition}
Let $\mathtt{DB}_{N}$ be a set of facial images and $F$ be a template extractor.
For the matrices $A, U \in \mathbb{R}^{d \times k}$ and an unitary matrix $R \in \mathbb{R}^{k \times k}$ defined as Lemma~\ref{lemma:First}, let us denote $\{z_{1}, \dots, z_{Q}\} \subset \mathbb{R}^{k}$ as the solution of the MCP over the set $\{RU^{T}z: z = F_{T}(\mathfrak{b}); \mathfrak{b} \in \mathtt{DB}_{N}\}$ with a constraint that $\|z_{i}\|_{2}=1$ for $i=1,\dots,Q$. 
Then the set $\net := \{y_{1}, \dots, y_{Q} \} \subset \mathbb{R}^{d}$ defined as $y_{i} = UR^{T}z_{i}$ maximizes
\begin{align}
    \left | \left\{ \mathfrak{b} \in \mathtt{DB}_{N}: 
     F_{T}(\mathfrak{b}) \in \bigcup_{i=1}^{Q} \mathbb{B}_{\mathcal{X}} \left( \frac{AA^{\dagger}y_{i}}{\|AA^{\dagger} y_{i} \|_{2}}; \tau \right)
    \right\} \right|.\nonumber
\end{align}
That is, \net corresponds to the maximum coverage of $F_{T}(\mathtt{DB}_{N})$ when the solution space is constrained over the column space of $A$.
\end{proposition}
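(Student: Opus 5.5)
The plan is a bijective correspondence argument. The map $\psi:\mathbb{S}^{k-1}\to \mathbb{S}^{d-1}\cap \mathrm{col}(A)$ given by $z\mapsto UR^{T}z$ identifies candidate centers on the two sides, and Lemma~\ref{lemma:First} shows that coverage is preserved under this identification.

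First I would check that $\psi$ is well defined and bijective onto the unit vectors of the column space of $A$.
\begin{itemize}[leftmargin=*]
\item Since $U^{T}U=I_{k}$ and $R$ is unitary, $\|UR^{T}z\|_{2}=\|z\|_{2}=1$, so $\psi$ maps unit vectors to unit vectors.
\item The image lies in $\mathrm{col}(U)=\mathrm{col}(A)$, assuming $A$ has rank $k$, which is already implicit in the invertibility of $L$.
\item The inverse is $y\mapsto RU^{T}y$, because $RU^{T}UR^{T}z=z$ and $UR^{T}RU^{T}y=UU^{T}y=y$ for $y\in\mathrm{col}(A)$.
\end{itemize}

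Next I would use $AA^{\dagger}=UU^{T}$ (computed in the proof of the lemma). For $y_{i}=\psi(z_{i})$ this gives $AA^{\dagger}y_{i}=y_{i}$. Moreover $\|AA^{\dagger}y_{i}\|_{2}=1$, so the normalized center $\frac{AA^{\dagger}y_{i}}{\|AA^{\dagger}y_{i}\|_{2}}$ appearing in the statement is just $y_{i}$.

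Then I would apply the third property of Lemma~\ref{lemma:First} with $x=F_{T}(\mathfrak{b})$ and $y=y_{i}$. Since $RU^{T}y_{i}=z_{i}$ has unit norm, it yields
\[
\langle RU^{T}F_{T}(\mathfrak{b}), z_{i}\rangle=\langle F_{T}(\mathfrak{b}), y_{i}\rangle .
\]
The ball membership used in the MCP on the transformed side is measured by $1-\langle\cdot,z_{i}\rangle$, i.e.\ the same cosine-distance expression as in the paper. Hence $RU^{T}F_{T}(\mathfrak{b})\in\mathbb{B}(z_{i};\tau)$ if and only if $F_{T}(\mathfrak{b})\in\mathbb{B}_{\mathcal{X}}(y_{i};\tau)$. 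Taking the union over $i$, the transformed coverage count of $\{z_{i}\}$ equals the displayed count for $\{y_{i}\}$.

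Finally, optimality transfers through the bijection. Suppose some $Q$ unit vectors $y'_{i}$ yielded a larger displayed count, where these may be arbitrary vectors in $\mathbb{R}^{d}$ since the count depends only on their normalized projections $\frac{AA^{\dagger}y'_{i}}{\|AA^{\dagger}y'_{i}\|_{2}}\in\mathrm{col}(A)$. Then the vectors $z'_{i}=\psi^{-1}$ of those projections would give a larger transformed coverage, contradicting the optimality of $\{z_{i}\}$. This also yields the closing interpretation: the displayed objective is exactly maximum coverage with centers restricted to the column space of $A$.

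The main obstacle is bookkeeping about the domain. The transformed vectors $RU^{T}F_{T}(\mathfrak{b})$ need not be unit length, so the "distance" on the $k$-side must be read as $1-\langle\cdot,z_{i}\rangle$ rather than as a genuine cosine distance between unit vectors, and I would state this convention explicitly. One must also note that $AA^{\dagger}y'\neq 0$ is required for the displayed objective to be defined for competitor sets.
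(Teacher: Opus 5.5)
Your proposal is correct and follows essentially the same route as the paper: the correspondence $y = UR^{T}z$, together with the third property of Lemma~\ref{lemma:First}, turns the threshold condition $\langle RU^{T}x, z\rangle > 1-\tau$ on the transformed side into $\langle x, AA^{\dagger}y/\|AA^{\dagger}y\|_{2}\rangle > 1-\tau$, so the two coverage problems coincide. You are more careful than the paper in two places: you handle competitor sets outside $\mathrm{col}(A)$ by passing to their normalized projections and applying $\psi^{-1}$, and you state the rank-$k$ assumption explicitly; the paper simply asserts that the two optimization problems are equivalent.
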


\begin{proof}
We first observe that $U$ has the full row rank; this implies that for any $z \in \mathbb{R}^{k}$, we can find $y \in \mathbb{R}^{d}$ such that $z = RU^{T}y$, or explicitly, $y = UR^{T}z$.

Then we focus on the metric ball over $\mathbb{R}^{k}$.
From the result of Lemma~\ref{lemma:First}, we can observe that, for $z \in \mathbb{R}^{k}$ and $y = UR^{T}z$
\begin{align}
    \langle RU^{T}x, \frac{z}{\|z\|_{2}} \rangle  > 1 - \tau \iff \langle x, \frac{AA^{\dagger}y}{\|AA^{\dagger}y\|_{2}} \rangle > 1 - \tau. \nonumber 
\end{align}
That is, if we consider the following optimization problem, whose solution is $\{z_{1}, \dots, z_{Q}\} \subset \mathbb{R}^{k}$,
\begin{align}
    \text{Maximize } &| \{ \mathfrak{b} \in \mathtt{DB}_{N}: \exists i \text{ s.t. }  \langle RU^{T}F_{T}(b), z \rangle > 1 - \tau  \}  | \nonumber \\
    \text{s.t. } &\|z_{i}\|_{2} =1 \text{ for } i=1,\dots, Q \nonumber,
\end{align}
we can deduce that the corresponding $\net := \{y_{1}, \cdots, y_{Q}\} \subset \mathbb{R}^{d}$ such that $y_{i} = UR^{T}z_{i}$, $i=1,\dots,Q$ becomes the solution of
\begin{align}
    \text{Maximize } &| \{ \mathfrak{b} \in \mathtt{DB}_{N}: \exists i \text{ s.t. } \langle F_{T}(b), \frac{AA^{\dagger}y_{i}}{\|AA^{\dagger}y_{i}\|_{2}} \rangle >  1- \tau  \}  | \nonumber \\
    \text{s.t. } &\|y_{i}\|_{2} =1 \text{ for } i=1,\dots, Q \nonumber.
\end{align}
Note that the constraints in the latter problem automatically holds in our choice of $y_{i}$ because $\|y_{i}\|_{2}^{2} = \langle UR^{T}z_{i}, UR^{T}z_{i} \rangle = \|z_{i}\|_{2}$.
Therefore, if we rewrite the above results in terms of the metric ball, then we finally obtain that \net maximizes
\begin{align}
    \left | \left\{ \mathfrak{b} \in \mathtt{DB}_{N}: 
     F_{T}(\mathfrak{b}) \in \bigcup_{i=1}^{Q} \mathbb{B}_{\mathcal{X}} \left( \frac{AA^{\dagger}y_{i}}{\|AA^{\dagger} y_{i} \|_{2}}; \tau \right)
    \right\} \right|.\nonumber
\end{align}
Since $AA^{\dagger} = UU^{T}$ is the projection operator onto the column space of $A$, the final center $\frac{AA^{\dagger}y_{i}}{\|AA^{\dagger}y_{i}\|_{2}}$ also lies within the column space of $A$, thus completing the proof.
\end{proof}

We remark that, since $RU^{T}x$ may not be a unit vector, the MCP solver over the $\mathbb{S}^{d-1}$ would not be well-defined.
Nevertheless, we observe that clustering-based methods that leverage the $\ell_{2}$ distance-based objective function, e.g., KMeans, provide the same solution to the given problem because
\begin{align}
    \| RU^{T}x-z\|_{2}^{2} = \|RU^{T}x\|_{2}^{2} + \|z\|_{2}^{2} - 2 \langle RU^{T}x, z\rangle \nonumber,
\end{align}
$\|z\|_{2}^{2}$ is the constant, and $\|RU^{T}x\|_{2}^{2}$ is independent of the choice of $z$.
That is, minimizing $\| RU^{T}x-z\|_{2}^{2}$ is equivalent to maximizing $\langle RU^{T}x, z\rangle$, which coincides with our analysis.

\balance
\section{K-Means-Greedy Hybrid MCP Solver}\label{sec:MCPDetail}

We provide the MCP solver we used during experiments.
Our MCP solver consists of two steps: the K-Means algorithm to find the candidates, and the Greedy algorithm to select the top $Q$ centers.
The motivation of such a hybrid approach stems from the difficulty of directly running the Greedy algorithm over the continuous space.
The candidates from K-Means serve as the discretization of the continuous space in terms of the local density, thus we can efficiently run the Greedy algorithm over the candidates.

For the K-Means algorithm, we utilize the $\ell_{2}$ distance objective function.
More precisely, for the given data points $\mathtt{DB}$ and the centroids $\{c_{1}, \dots, c_{K}\}$, the algorithm aims to minimize $\sum_{i=1}^{K}\sum_{x \in C_{i}}\|  x - c_{i} \|_{2}^{2}$, where $C_{i}$ denotes the cluster corresponding to $c_{i}$, i.e., 
\begin{align}
    C_{i} = \{x \in \mathtt{DB}: \arg\min_{i=1,\dots,K} \{ \|x-c_{i}\|_{2} \} = i  \}\nonumber.
\end{align}
Once the cluster is determined, we can minimize the objective function by selecting $c_{i} = \frac{1}{|C_{i}|}\sum_{x \in C_{i}}x$.
We repeat the process of determining the cluster and updating centers until convergence, or up to a fixed number of iterations.
Note that, if both $\mathtt{DB}$ and $\{c_{1}, \dots, c_{K}\}$ are sets of unit vectors, then the objective function is equivalent to minimizing cosine distance.

By running K-Means algorithm, we first find $\kappa Q$ candidates $\{w_{1}, \dots, w_{\kappa Q}\}$ for some $\kappa > 1$, which is a hyperparameter.
Among them, we select the final $Q$ centers to maximize the coverage of the given set $\mathtt{DB}$ in a greedy manner.
To this end, for each step, we first find a candidate $w_{i}$ such that its $\tau$-neighborhood $\mathbb{B}_{\mathbb{S}^{d-1}}(w_{i};\tau)$ maximally covers the points of $\mathtt{DB}$ that was not covered by previous candidates.
This can be realized by subtracting $\mathbb{B}_{\mathbb{S}^{d-1}}(w_{i};\tau)$ to $\mathtt{DB}$ for each selection of $w_{i}$. 
We repeat this process until we find all $Q$ candidates.
We describe the algorithm in Algorithm~\ref{alg:MCPSolver}.

\begin{algorithm}[t]
\caption{K-Means-Greedy Hybrid MCP Solver}\label{alg:MCPSolver}
\begin{algorithmic}[1]
\REQUIRE A set of points $\mathtt{DB} \subset \mathbb{S}^{d-1}$, the size of centroids $Q$, threshold $\tau$, multiplier $\kappa$, maximum iteration $n_{KM}$ of K-Means.
\ENSURE A set of points $\{z_{1}, \dots, z_{Q} \} \subset \mathbb{S}^{d-1}$.
\STATE Find $\kappa Q$ points $\{w_{1}, \dots, w_{\kappa Q}\} \gets \mathsf{KMeans}(\mathtt{DB}, \kappa Q, n_{KM})$.
\STATE Initialize an index set $\mathcal{I} \gets \emptyset$
\WHILE{$|\mathcal{I}| < Q$}
\STATE Find $i \gets \arg\max _{j \in \{1,\dots, \kappa Q \} \backslash \mathcal{I}} |\mathtt{DB} \cap \mathbb{B}_{\mathbb{S}^{d-1}}(w_{j}; \tau)|$
\STATE Append $\mathcal{I} \gets \mathcal{I} \cup \{i\}$ and update $\mathtt{DB} \gets \mathtt{DB} \backslash \mathbb{B}_{\mathbb{S}^{d-1}}(w_{i}; \tau)$
\ENDWHILE
\RETURN $\{z_{i} : i \in \mathcal{I}\}$.
\end{algorithmic}
\end{algorithm}

\end{document}